\newtheorem{theorem}{Theorem}
\newtheorem{claim}{Claim}
\newtheorem{definition}{Definition}
\newtheorem{lemma}{Lemma}
\newcommand{\opt}{\textsc{opt}}
\newcommand{\eps}{\epsilon}
\newcommand{\E}{\mathbb{E}}
\newcommand{\R}{{\mathbb{R}}}
\title{Online Load and Graph Balancing for Random Order Inputs
}
\author{Sungjin Im\thanks{ UC Merced, USA \texttt{sim3@ucmerced.edu}. }  
\and Ravi Kumar\thanks{Google, Mountain View USA \texttt{ravi.k53@gmail.com}.}
\and Shi Li\thanks{Nanjing University, China \texttt{shili@nju.edu.cn}.}
\and Aditya Petety\thanks{UC Merced, USA \texttt{apetety@ucmerced.edu}.} 
\and Manish Purohit\thanks{Google, Mountain View USA \texttt{mpurohit@google.com}.}
}
\begin{document}

%\keywords{Scheduling, Load Balancing, Random Order}

\maketitle
\begin{abstract}
Online load balancing for heterogeneous machines aims to minimize the makespan (maximum machine workload) by scheduling arriving jobs with varying sizes on different machines. In the adversarial  setting, where an adversary chooses not only the collection of job sizes but also their arrival order, the problem is well-understood and the optimal competitive ratio is known to be $\Theta(\log m)$ where $m$ is the number of machines. In the more realistic random arrival order model, the understanding is limited. Previously, the best lower bound on the competitive ratio was only $\Omega(\log \log m)$.

We significantly improve this bound by showing an $\Omega( \sqrt {\log m})$ lower bound, even for the restricted case where each job has a unit size on two machines and infinite size on the others.  On the positive side, we propose an $O(\log m/\log \log m)$-competitive algorithm, demonstrating that better performance is possible in the random arrival model.
\end{abstract}
\section{Introduction}

Online load balancing is a fundamental problem encountered in parallel/distributed computing and network communication, appearing in various forms. It focuses on effectively distributing limited resources to handle tasks that arrive over time (online). Due to its  importance, this problem has been extensively studied in both practice and theory
\cite{azar2005line,jiang2015survey,ghomi2017load}. 

The model of unrelated machines has received significant attention due to its effective representation of heterogeneous processing capabilities. In this setting, the processing time of a job can differ based on the assigned machine. Minimizing the makespan, the maximum workload across all machines, is one of the most common objectives in this realm.

In many practical scenarios, the sequence of job arrivals is unpredictable, necessitating online load balancing strategies. For minimizing makespan in this online setting, an $O(\log m)$-competitive algorithm is known \cite{azar1995competitiveness,aspnes1997line}, where $m$ represents the number of machines. This algorithm guarantees a solution within a logarithmic factor of the optimal offline solution, which has full knowledge of the entire job sequence in advance. Notably, this competitive ratio is known to be tight under adversarial job arrival orders \cite{azar1995competitiveness}, where the algorithm has no information about future jobs.

One way to circumvent the strong lower bounds is to consider the random arrival order model. In this model, the adversary first defines the machines and jobs, then the jobs arrive in a random permutation. The appeal of this model is its simplicity: no assumptions are made about the overall input, and only the order of arrival is randomized. Crucially, the optimal offline solution remains unchanged. The random arrival model has been exceptionally effective in breaking many lower bound barriers, as seen in the secretary problem, packing integer problem, online facility location problem, and many others. For a comprehensive overview of the results in the random arrival model, see~\cite{gupta_singla_2021}.

Despite the success of the random arrival model in other contexts, research on online load balancing within this framework remains surprisingly thin. To the best of our knowledge, the only non-trivial result for this problem is a lower bound of $\Omega(\log \log m)$ \cite{plank2017online}. 

\subsection{Our Results}

In this paper, we make significant strides in online load balancing for unrelated machines with the random arrival model by establishing the following new bounds. 

\paragraph{Lower Bounds.}  We show lower bounds for the special case of graph balancing \cite{ebenlendr2014graph}.  Here, edges representing jobs arrive online, and we must immediately orient each arriving edge towards one of its endpoints. The objective is to minimize the maximum in-degree of any vertex, i.e., the highest number of incoming edges for a single vertex. This problem is equivalent to the online load balancing scenario where each job has a unit size on exactly two machines and an infinite size on the rest.

Even in this restricted setting, we establish a new lower bound of $\Omega(\sqrt{\log m})$ for any randomized algorithm, where $m$ denotes the number of machines (or the number of vertices in the graph balancing problem). This is an exponential improvement over the previously known lower bound of $\Omega(\log \log m)$. Interestingly, our bound holds even when the instance is a tree and the algorithm knows the structure a priori. This result is presented in Section~\ref{sec:lb-greedy}. 

We further demonstrate that the intuitive greedy algorithm, which assigns each job to the less loaded machine (breaking ties randomly), has a competitive ratio of $\Omega(\log m / \log \log m)$. Expressed in graph balancing terminology, this algorithm orients each edge towards the endpoint with the lower in-degree. This highlights the necessity of deviating from the straightforward greedy approach to achieve a competitive ratio 
substantially better than the $\Theta(\log m)$ bound, which is the best possible in the adversarial order setting. This result can be found in Section~\ref{sec:lb-general}.

\paragraph{Upper Bounds.} On the positive side, we present an algorithm for online load balancing in the random arrival order model that achieves a competitive ratio of $O(\log m / \log \log m)$. Notably, this result applies to the general setting of unrelated machines. While the improvement over the $O(\log m)$ bound in the adversarial model is modest, it nevertheless establishes a meaningful separation between the two models for this problem. This result is in Section~\ref{sec:upper}.

\paragraph{Remark.}
    In the online makespan minimization for unrelated machines problem, we use $m$ to denote the number of machines and $n$ the number of jobs. This is the standard notation used in the scheduling literature. However, in the graph balancing problem, $m$ is the number of edges and $n$ is the number of nodes. Since nodes correspond to machines, a lower bound $f(n)$ for the graph balancing implies a lower bound $f(m)$ for  makespan minimization. In fact,  this differentiation is not critical as all lower bound instances we use for the graph balancing are trees; thus, we have $n = m+1$.

\subsection{Our Techniques}

As a warm-up, we review an instance that demonstrates a lower bound of $\Omega(\log m)$ on the competitive ratio of any deterministic algorithm for adversarial arrivals. Consider a setting with machines indexed from 1 to $m$, where $m$ is assumed to be a power of 2 for simplicity. Jobs arrive in rounds, with each job having a size of 1 and being assignable to only two specific machines. We can represent this constraint by representing jobs as pairs of machine indices. For example, a pair (1, 2) indicates that the job can be assigned to either machine 1 or machine 2.

The instance unfolds as follows. In the first round, jobs with pairs $(1, 2), (3, 4), \ldots, (m-1, m)$ arrive. Without loss of generality, assume the algorithm assigns these jobs to the even-indexed machines. The subsequently arriving jobs can only be assigned to the even-indexed machines, which all already have one job. By focusing on the even-indexed machines and halving their indices, we effectively reduce the problem to $m/2$ machines. Each recursion of this process increases the load on the machines considered in the current round by 1. 
Therefore in the $(\log m)$th round, the algorithm's makespan reaches $\log m$, while an optimal solution has a makespan of 1, establishing the $\Omega(\log m)$ lower bound.

This instance also can be viewed as a tree. See Figure~\ref{fig:lb-classic}. 

\begin{figure}
    \centering
    \includegraphics[width=0.4\linewidth]{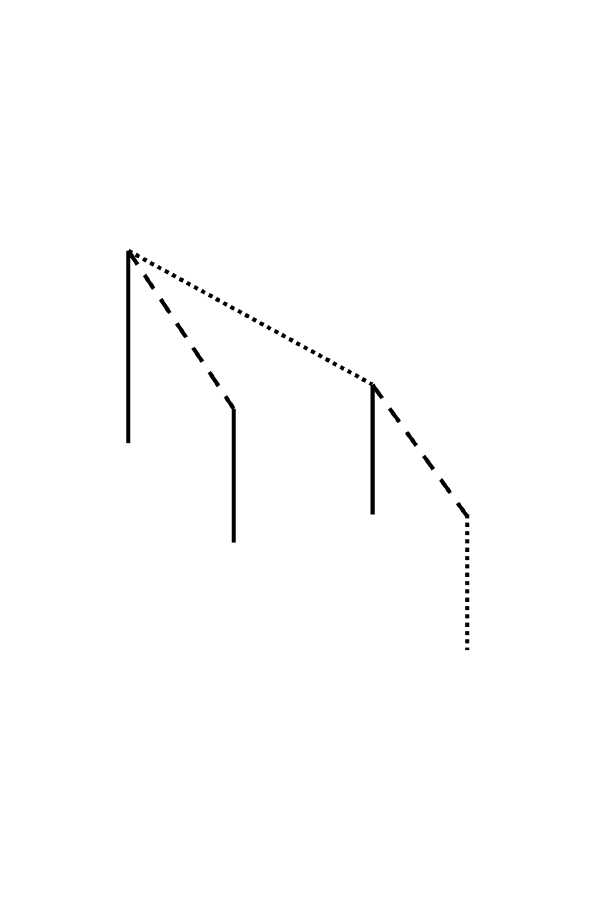}
    \caption{Illustration of a lower bound instance giving $\Omega(\log m)$ lower bound in the adversarial arrival model. Edges arrive in the order of solid, dashed, and dotted, and they are wlog assumed to be oriented towards the root.}
    \label{fig:lb-classic}
%    \Description{Toy example of a graph that achieves the lower bound described}
\end{figure}
\paragraph{Lower Bounds.} 
In the random arrival order model, orchestrated job sequences that intentionally concentrate load on specific machines are no longer viable, as we cannot dictate the order in which jobs appear. While the aforementioned instance cannot establish a lower bound for arbitrary algorithms within this model, a minor tweak of it enables us to demonstrate a lower bound of $\Omega(\log m/\log \log m)$  for the intuitive greedy algorithm, which assigns each job to the machine with the lesser current load.

To `reproduce' the adversarial order even in the random arrival order model, we use a `fat' tree where every node has a polylogarithmic degree except the leaf nodes; here the tree has an $\Omega(\log m / \log \log m)$ height. The key idea is that in the fat tree, a majority of leaf edges (jobs) arrive before the other edges and we can show that this bad event can occur recursively with a high probability, resulting a high load on the root node.

Unfortunately, we cannot use the same fat tree to show a strong lower bound for arbitrary algorithms. This is because of the following reason. To obtain a good load balancing, it is important to orient edges towards the leaf nodes. However, it is easy for an algorithm to distinguish the leaf nodes from the others early in the random order model because the non-leaf nodes have high degrees.

To overcome this issue, our recursive construction is done more carefully. 
The main property the tree has is that when an edge arrives online, no algorithm should be able to distinguish between the two end points as to which is the parent node for a large number of edges. However, this lower bound instant construction comes at a cost. The tree become significantly fatter and has a height $\Theta(\sqrt{\log m})$, which translates to our lower bound.

\paragraph{Upper Bound.}  For the upper bound on makespan minimization for unrelated machines, we draw inspiration for the algorithm from \cite{molinaro2017online}. The idea is to keep track of a potential function over all machines and reset these potentials after half of the inputs have arrived \cite{gupta2014experts}. The potential function is essentially the softmax function of the machine loads. The algorithm then assigns an arriving job to the machine that incurs the least increase in the potential. The algorithm is identical in both phases so it suffices to analyze only one half. This restarting helps as it ensures sufficient randomness for jobs that arrive later, which is crucial for the analysis. 

\subsection{Other Related Work}

As mentioned before, online load balancing finds numerous applications in parallel/distributed computing and network communication. Thus, a vast literature exists on this topic \cite{azar2005line,jiang2015survey,ghomi2017load,buchbinder2006fair,karger2004simple} and we only cover the most related work. 

The online load balancing problem was introduced by Graham in his seminal work \cite{Graham66,Graham69}. The work showed the list scheduling algorithm that assign an arriving job to the least loaded machine is $(2 - 1/m)$-competitive when all machines are identical. The current best known competitive ratio for this problem is 1.916 \cite{Albers99}. When the machines are uniformly related, meaning that machines have different speeds, $O(1)$-competitive algorithms are known \cite{berman2000line}.
For unrelated machines, $O(q)$-competitive algorithms are known for minimizing the $q$-norm of machine loads \cite{awerbuch1995load,caragiannis2008better}. For extension to multidimensional load balancing, see \cite{azar2013tight,meyerson2013online,ImKKP19}.

There has been a load of work on beyond-worst-case analysis in the recent years. The random arrival order model is a prime example of this framework. Here, we present a selection of recent advances within this model. For the online load balancing problem on identical machines, when jobs arrive in random order, there is a 1.8478-competitive algorithm~\cite{albers2021scheduling} (which is better than known lower bounds in the adversarial setting).
When the elements are revealed in a random order, the online set cover admits a competitive ratio of $O(\log mn)$~\cite{9719768}, which matches the offline bound of $O(\log n)$ when $m$ is polynomial in $n$; here $m$ is the number of sets and $n$ is the number of elements. The online facility location problem  admits a competitive ratio of 3  in the random order model~\cite{kaplan2023almost}. As mentioned, the reader is referred to the survey \cite{gupta_singla_2021} for work in the random arrival order.

Molinaro \cite{molinaro2017online} provides algorithms that are simultaneously good both in the adversarial arrival model and in the random arrival model. In particular, for the makespan objective, the work gives an algorithm that is $\Theta(\log m / \eps)$-competitive in the adversarial order, while simultaneously giving a makespan of $(1+\eps)\opt + O(m \log m / \eps^2)$ in random order. Unfortunately, the analysis requires $\eps \in (0, 1]$ and therefore does not give a competitive ratio better than $O(\log m)$. 

Online load balancing has been recently studied in the presence of ML predictions. The algorithms are given certain compact predictions on the input and can achieve competitive ratios significantly better than $O(\log m)$ when the predictions are almost accurate while asymptotically retaining the worst case guarantees \cite{LattanziLMV20,li2021online}. 

We now discuss related work on the offline load balancing problems. Offline, the load balancing problem for unrelated machines does not admit a better than 1.5-approximation unless P = NP and a 2-approximation is known for the problem due to the seminal work by Lenstra et al.~\cite{lenstra1990approximation}. For the graph balancing problem, when $G$ is known offline, \cite{ebenlendr2008graph} gives a 1.75-approximation.

\newcommand{\makespan}{\mathrm{makespan}}

\section{Preliminaries and Notation}
\label{sec:Prelims}
We consider the online makespan minimization problem on unrelated machines. Let $\mathcal{J}$ be a set of $n$ jobs and let $\mathcal{M}$ be a set of $m$ machines. Let $p_{i,j} \in \mathbb{R}^+ \cup \{\infty\}$ denote the processing load of job $j$ on machine $i$; in the online setting the values of $\{p_{i,j}\}$ for a job $j$ become known to the algorithm when job $j$ arrives.  For an assignment 
$\sigma: \mathcal{J} \rightarrow \mathcal{M}$ of jobs to machines, we define its \emph{makespan} as 
$\makespan(\sigma) = \max_{i \in \mathcal{M}} \sum_{j \in \sigma^{-1}(i)} p_{i,j}$.  The goal of an online  algorithm is to find an assignment $\sigma$ that minimizes $\makespan(\sigma)$.

In this paper we consider the online setting with uniformly random job arrivals. Let $\mathcal{A}$ be an online algorithm. Let $\pi$ be a uniformly random permutation of jobs in $\mathcal{J}$. At each time step $t \in [n]$, a job $\pi(t) \in \mathcal{J}$ arrives and  $\mathcal{A}$ needs to assign it to exactly one machine in $\mathcal{M}$; let $\makespan(\mathcal{A}; \{p_{i,j}\}, \pi)$ be the random variable denoting its makespan. The number of jobs, $n$, is known to the algorithm before any of the jobs arrive. 

Let $\sigma^*$ be the assignment of an optimal, offline algorithm. For any algorithm $\mathcal{A}$, we define its competitive ratio as the worst-case ratio of the expected makespan incurred by the algorithm to the optimal makespan, i.e.,  
\[\text{competitive-ratio}(\mathcal{A}) = \max_{\{p_{i,j}\}} \dfrac{\mathbb{E}_{\pi}[\makespan(\mathcal{A}; \{p_{i,j}\}, \pi)]}{\makespan(\sigma^*)}.\]
Here, we will parameterize the competitive ratio by the number of machines, $m$. So, the competitive ratio is defined over the instances that have at most $m$ machines. 

We also consider a very special case of the makespan minimization problem called \emph{graph balancing}. Let $G = (V,E)$ be an undirected graph. The goal is to orient the edges of the graph $G$ so as to minimize $\max_{v \in V} \text{indegree(v)}$. It can be readily seen that this is a special case of makespan minimization on unrelated machines by considering each vertex as a machine and an edge as a job where each job has a unit load on exactly two machines and an infinite load otherwise.  Again, in this paper, we study the graph balancing problem with edges arriving uniformly at random.

\section{Unrelated Machines: Upper Bound}
    \label{sec:upper}

In this section we give an improved upper bound on the competitive ratio for the online makespan minimization on unrelated machines, under uniformly random job arrivals. We assume that $\opt \geq 1$. We further assume wlog that 
$p_{i,j} \in [0, 1]  \cup \{\infty\}$. This is because we can guess the optimum objective within a constant factor by the standard doubling trick: if the guess turns out to be wrong (we can solve the offline problem within a factor of 2 \cite{lenstra1990approximation}), then we double our guess and pretend that all machines have zero load. When the current guess is $g \geq 2 \opt$, $j$ cannot be assigned to machine $i$ if $p_{i,j} \geq g$; therefore, we can replace the value of $p_{i,j}$ with $\infty$. For all other $p_{i,j} \leq g$, by scaling them down by $g$, we have the assumption. 

\subsection{An $O(\frac{\log m}{\log \log m})$ Upper Bound}

\subsubsection{Algorithm}

The algorithm has two phases that are identical~\cite{gupta2014experts,molinaro2017online}. In the second phase, which starts when the $(\lfloor n/2 \rfloor +1)$st job arrives, it pretends that no jobs have arrived so far. The algorithm uses a softmax of machine loads as the potential, and assigns a new job to the machine that increases the potential the least.

To describe the algorithm, it will be convenient to define load vector $x \in \R_{\geq 0}^m$, which describes the current load of each machine, i.e., if $J_i$ is the set of jobs currently assigned to machine $i$, $x_i = \sum_{j \in J_i} p_{i,j}$. For easy indexing, say job $j^t$ arrives at time $t$. If we assign $j^t$ to machine $i$, it adds to the current load vector a vector $w^t$, whose $i$th entry is $p_{i,j} \in [0,1]$ and all other entries are 0. Notice that $w^t$ has only one non-zero coordinate. Let $s^t$ be the load vector right after the algorithm assigned $j^t$, i.e., $s^t = w^1 + \cdots + w^t$.

We are now ready to define our potential function:

$$\psi(x) = \frac1a\ln\sum_{i \in \mathcal{M}}e^{ax_i},$$
where $x \in \R_{\geq 0}^m$ is the load vector and $a > 0$ is a parameter to be decided later. As mentioned above, a new job is assigned to the machine that increases the potential the least.

\newcommand{\alg}{\textsc{ALG}}

As observed in \cite{gupta2014experts,molinaro2017online}, it suffices to only consider the first half of the jobs. This is because the first half and the second half have the same distribution, and the algorithm ``resets" after seeing the first half of the jobs. This will increase the makespan only by a factor of two. Henceforth, we will only consider the first $n/2$ jobs. 

Define $v^t = \nabla \psi(s^{t-1})$. Notice that $v^t_i = \frac1a\frac{a\cdot e^{ax_i}}{\sum_{i'} e^{ax_{i'}}} = \frac{e^{ax_i}}{\sum_{i'} e^{ax_{i'}}}$. The following is immediate by straightforward calculations. 

\begin{claim}
    \label{claim:deri-change}
    For all $t \geq 1$, we have $||v^t||_1 = 1$ and $v^t \leq e^a v^{t-1}$, where the inequality holds for each coordinate. 
\end{claim}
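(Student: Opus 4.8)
The claim has two parts, both of which should follow from direct computation with the definition $v^t = \nabla\psi(s^{t-1})$ and $v^t_i = e^{a s^{t-1}_i}/\sum_{i'} e^{a s^{t-1}_{i'}}$. For the first part, $\|v^t\|_1 = 1$: since each coordinate $v^t_i$ is a ratio of a positive exponential to the sum of all such exponentials, summing over $i$ gives $\sum_i v^t_i = \left(\sum_i e^{a s^{t-1}_i}\right)\big/\left(\sum_{i'} e^{a s^{t-1}_{i'}}\right) = 1$; and all coordinates are nonnegative since $s^{t-1} \geq 0$ entrywise, so the $\ell_1$ norm equals the sum, which is $1$.

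**The coordinate-wise inequality.** For the second part, $v^t \leq e^a v^{t-1}$ coordinate-wise, I would fix a coordinate $i$ and write out the ratio
\[
\frac{v^t_i}{v^{t-1}_i} = \frac{e^{a s^{t-1}_i}}{\sum_{i'} e^{a s^{t-1}_{i'}}} \cdot \frac{\sum_{i'} e^{a s^{t-2}_{i'}}}{e^{a s^{t-2}_i}}.
\]
Now use the fact that the load vectors are monotone nondecreasing and increase by at most $1$ in a single coordinate at each step: $s^{t-1} = s^{t-2} + w^{t-1}$ where $w^{t-1}$ has a single nonzero entry lying in $[0,1]$. Thus $e^{a s^{t-1}_i} \leq e^{a} e^{a s^{t-2}_i}$ (using $a>0$ and the increment being at most $1$), which bounds the first factor's numerator; and $\sum_{i'} e^{a s^{t-1}_{i'}} \geq \sum_{i'} e^{a s^{t-2}_{i'}}$ since each term only grows (the load vector is entrywise nondecreasing), which bounds the second factor's numerator against its own denominator. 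Combining, $v^t_i / v^{t-1}_i \leq e^a \cdot \frac{e^{a s^{t-2}_i}}{\sum_{i'} e^{a s^{t-2}_{i'}}} \cdot \frac{\sum_{i'} e^{a s^{t-2}_{i'}}}{e^{a s^{t-2}_i}} = e^a$, as desired.

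**Anticipated obstacles.** I expect this to be genuinely routine — the claim says "immediate by straightforward calculations" and I agree. The only points requiring a moment's care are: (i) making sure the base case $t=1$ is handled, i.e. that $s^0 = 0$ (the empty sum) so that $v^1$ is the uniform distribution and the inequality $v^1 \leq e^a v^0$ holds with $v^0$ also the uniform vector — this just needs the convention that $s^0$ is the zero vector; and (ii) being careful that the increment $w^t$ could in principle have its nonzero coordinate at $i$ itself or at some other machine, but in either case $s^{t-1}_i - s^{t-2}_i \in [0,1]$, so the bound $e^{a(s^{t-1}_i - s^{t-2}_i)} \leq e^a$ is valid regardless. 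No deeper structural argument is needed.
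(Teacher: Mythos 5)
Your proof is correct and is exactly the "straightforward calculation" the paper alludes to without spelling out: $\|v^t\|_1=1$ because $v^t$ is a softmax distribution, and $v^t_i/v^{t-1}_i \le e^a$ follows from $s^{t-1}_i - s^{t-2}_i \in [0,1]$ (bounding the numerator) together with the entrywise monotonicity of $s^{t-1}\ge s^{t-2}$ (bounding the denominator). Your side remarks about the $t=1$ base case and the location of the nonzero coordinate of $w^{t-1}$ are the right things to note, and the argument is complete.
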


Using the convexity of the $\psi$ function, we have 
\begin{align*}
    \psi(s^t) - \psi(s^{t-1}) \leq \langle w^t, \nabla\psi(s^t)\rangle  =  \langle w^t, v^{t+1}\rangle \leq e^a \langle w^t, v^t\rangle.
\end{align*}
Then, adding the inequalities over all $t$'s in $\{1, 2, \ldots, n/ 2\}$,  we have 
\begin{align*}
    \psi(s^{n/2}) - \psi(0) \leq e^a \sum_{t = 1}^{n/2} \langle w^t, v^t\rangle.
\end{align*}
Noticing that $\psi(s^{n/2}) \geq ||s^{n/2}||_\infty$ and $\psi(0) = \frac{\ln m}{a}$, we have 
\begin{align*}
    ||s^{n/2}||_\infty \leq e^a \sum_{t = 1}^{n/2} \langle w^t, v^t\rangle + \frac{\ln m}{a}.
\end{align*}

Let $o^t \in [0, 1]^m$ be the load vector induced by the job $j^t$ in the optimum solution. Then $o^t_{i'} = p_{ij_t}$ if $i' = i$ and $0$ otherwise. Here, we are considering the fixed optimum solution that assigns all the $n$ jobs in $J$ to machines. Although each job is assigned to a fixed machine in the optimum solution, $o^t$ is stochastic as $j^t$ is the $t$th job in the random order.

\begin{claim}
    For all $t \geq 1$, we have $\langle v^t, w^t\rangle \leq e^{a}\langle v^t, o^t \rangle$.
\end{claim}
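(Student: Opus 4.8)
The plan is to sandwich the potential increase $\psi(s^t) - \psi(s^{t-1})$ between $\langle v^t, w^t\rangle$ from below and $e^a \langle v^t, o^t\rangle$ from above, using convexity of $\psi$ in both directions together with the greedy selection rule.

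For the lower bound, convexity of $\psi$ at the base point $s^{t-1}$ gives $\psi(s^t) \ge \psi(s^{t-1}) + \langle \nabla\psi(s^{t-1}), s^t - s^{t-1}\rangle = \psi(s^{t-1}) + \langle v^t, w^t\rangle$, since $s^t - s^{t-1} = w^t$ and $v^t = \nabla\psi(s^{t-1})$ by definition. Hence $\langle v^t, w^t\rangle \le \psi(s^t) - \psi(s^{t-1})$.

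For the upper bound, the key observation is that when $j^t$ arrives the algorithm could have assigned it to the machine $i$ that the optimum uses for $j^t$; since that machine has finite (indeed at most $1$) processing time for $j^t$, this is a legal move, and because $o^t$ is the vector with single nonzero coordinate $o^t_i = p_{i,j^t}$, that move would have produced exactly the load vector $s^{t-1} + o^t$. As the algorithm picks the machine minimizing the potential increase, $\psi(s^t) - \psi(s^{t-1}) \le \psi(s^{t-1} + o^t) - \psi(s^{t-1})$. Now bound the right-hand side: convexity of $\psi$ at the point $s^{t-1} + o^t$ gives $\psi(s^{t-1}+o^t) - \psi(s^{t-1}) \le \langle \nabla\psi(s^{t-1}+o^t), o^t\rangle$. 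Finally, since $o^t \in [0,1]^m$, the same computation as in Claim~\ref{claim:deri-change} shows $\nabla\psi(s^{t-1}+o^t) \le e^a\, \nabla\psi(s^{t-1}) = e^a v^t$ coordinate-wise: each numerator $e^{a(x_i + o^t_i)}$ grows by at most a factor $e^a$ over $e^{ax_i}$, while the normalizing denominator only increases. Since $o^t \ge 0$, this yields $\psi(s^{t-1}+o^t) - \psi(s^{t-1}) \le e^a \langle v^t, o^t\rangle$.

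Chaining the three inequalities gives $\langle v^t, w^t\rangle \le \psi(s^t) - \psi(s^{t-1}) \le e^a\langle v^t, o^t\rangle$, which is the claim. I expect the only delicate point is the bookkeeping in the upper bound: one must verify that ``the algorithm assigning $j^t$ to the optimum's machine'' leads precisely to the load vector $s^{t-1}+o^t$ (using that $o^t$ has a single nonzero coordinate whose value equals the load the algorithm would incur on that machine, which holds because the processing-time data is the same for both), and one must apply convexity at the shifted base point $s^{t-1}+o^t$, rather than at $s^{t-1}$, in order to obtain an upper bound on the potential increase instead of a lower one. The rest is a direct computation with the softmax gradient, essentially identical to Claim~\ref{claim:deri-change}.
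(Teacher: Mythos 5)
Your proof is correct and follows essentially the same chain of inequalities as the paper: convexity of $\psi$ at $s^{t-1}$ for the lower bound, the greedy rule to replace $w^t$ by $o^t$, convexity at the shifted point $s^{t-1}+o^t$ for the upper bound, and the factor-$e^a$ gradient comparison (as in Claim~\ref{claim:deri-change}, using $\|o^t\|_\infty \le 1$ and $o^t \ge 0$). The ``delicate points'' you flag are exactly the ones the paper relies on, and your bookkeeping handles them correctly.
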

\begin{proof}
\begin{align*}
    \langle w^t, v^{t} \rangle 
    & = \langle w^t, \nabla \psi(s^{t-1}) \rangle \\
    &\leq \psi(s^t) - \psi(s^{t-1}) && \triangleright \mbox{$s^t = s^{t-1} + w^t$ and $\psi$ is convex} \\
    &\leq \psi(s^{t-1}+ o^t) - \psi(s^{t-1}) && \triangleright \mbox{Algorithm is greedy}\\
    &\leq \langle o^t, \nabla \psi(s^{t-1}+o^t)\rangle  && \triangleright \mbox{$\psi$ is convex} \\
    &\leq \langle o^t, e^a \nabla \psi(s^{t-1})\rangle && \triangleright \mbox{Since $||o^t||_\infty \leq 1$} \\
    &= e^a \langle o^t, v^t \rangle.
    \qedhere
\end{align*}    
\end{proof}

Thanks to this claim, we have

\begin{align} \label{sec:restart}
    ||s^{n/2}||_\infty \leq e^{2a} \sum_{t = 1}^{n/2} \langle v^t, o^t\rangle + \frac{\ln m}{a}.
\end{align}

Consider an arbitrary time step $t$. So far, only jobs $j^1, \ldots, j^{t-1}$ have been revealed and they completely determine $v^t$.
Then, $j^t$ is sampled from the other jobs in $J$ uniformly at random. Note that $o^{t} + o^{t+1} + \dots + o^n \leq \opt \cdot \mathbf{1}$. Therefore, we have $\E[o^{t} \mid v^{t}] \leq \frac{\opt}{n-(t-1)} \cdot  \mathbf{1}$. Moreover, $||v^t||_1 \leq 1$; see Claim~\ref{claim:deri-change}. So, we have $\E[\langle v^t, o^t \rangle \mid v^t] \leq \frac{\opt}{n-(t-1)}$. Taking expectation over Eq.~(\ref{sec:restart}), we get 
\begin{align*}
    \E[||s^{n/2}||_{\infty}] & \leq e^{2a}\opt \left(\frac{1}{n} + \frac{1}{n-1} + \cdots + \frac{1}{n/2} \right) + \frac{\ln m}{a} \\
    & \leq 2 e^{2a}\opt + \frac{\ln m}{a}.
\end{align*}
The last inequality above is why we run the algorithm in two phases (otherwise, the summation leads to a $\Theta(\log n)$ term). 

By setting $a = \frac{\ln \ln m}{6}$, we have 
\begin{align*}
       2 e^{2a}\opt + \frac{\ln m}{a} &=  2(\ln m)^{\frac{1}{3}}\opt + \frac{6 \ln m}{\ln \ln m} \\
       &= O \left(\frac{\ln m}{\ln \ln m} \right)  \opt, 
    \end{align*}
where the last equation follows since  $\opt \geq 1$ and $2(\ln m)^{\frac{1}{3}} < \frac{6 \ln m}{\ln \ln m} \text{for $m \geq 2$}$. Thus, we have shown that the maximum load is $O(\frac{\log m}{\log \log m})  \opt$ in expectation for the first half of the arriving jobs.

As discussed, our algorithm restarts after assigning the first half jobs, and the makespan for the second half of jobs can be upper bounded symmetrically. Thus, we have the following.

\begin{theorem}
    There exists an $O(\log m / \log \log m)$-competitive algorithm for minimizing makespan on unrelated machines, when the jobs arrive in uniformly random order. 
\end{theorem}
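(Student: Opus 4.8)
The plan is to analyze the two-phase ``greedy on a softmax potential'' algorithm described above, showing that each phase incurs expected maximum load $O(\log m/\log\log m)\cdot\opt$; since the two phases are identical and the job multiset handled in each is distributed the same way, working with a single phase loses only a factor of two overall. First I would normalize the instance: by the standard doubling trick together with the offline $2$-approximation of Lenstra~\etal, we may assume $\opt\ge 1$ and that every finite $p_{i,j}$ lies in $[0,1]$, so each arriving job contributes a one-hot vector $w^t$ with $\|w^t\|_\infty\le 1$, and likewise $\|o^t\|_\infty\le 1$ for the optimal assignment vectors.

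The core of the argument is a regret-type potential bound. Using $\psi(x)=\frac1a\ln\sum_{i}e^{ax_i}$, I would first record the elementary properties: $\psi$ is convex, $\psi(x)\ge\|x\|_\infty$, $\psi(0)=\frac{\ln m}{a}$, the gradient $v^t=\nabla\psi(s^{t-1})$ is a probability vector, and moving a single coordinate by at most $1$ scales $v$ coordinatewise by at most $e^a$ (this is Claim~\ref{claim:deri-change}). Telescoping the convexity inequality $\psi(s^t)-\psi(s^{t-1})\le\langle w^t,\nabla\psi(s^t)\rangle\le e^a\langle w^t,v^t\rangle$ over the first $n/2$ steps and using $\psi(s^{n/2})\ge\|s^{n/2}\|_\infty$ gives $\|s^{n/2}\|_\infty\le e^a\sum_{t}\langle w^t,v^t\rangle+\frac{\ln m}{a}$. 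Because the algorithm assigns $j^t$ to the machine minimizing the potential increase, the increment it pays is at most what the optimum's choice $o^t$ would pay; a further application of convexity plus the $e^a$-smoothness converts this into $\langle w^t,v^t\rangle\le e^{2a}\langle o^t,v^t\rangle$, yielding $\|s^{n/2}\|_\infty\le e^{2a}\sum_{t\le n/2}\langle o^t,v^t\rangle+\frac{\ln m}{a}$ deterministically.

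The step I expect to be the crux is turning this deterministic bound into an expectation bound that beats the trivial $\log n$ factor, and this is exactly where the random order and the restart are used. Conditioned on the first $t-1$ jobs (which fully determine $v^t$), the job $j^t$ is uniform among the $n-t+1$ jobs not yet revealed, and the optimal load vectors of the not-yet-seen jobs sum coordinatewise to at most $\opt\cdot\mathbf{1}$; hence $\E[o^t\mid v^t]\le\frac{\opt}{n-t+1}\mathbf{1}$, and since $\|v^t\|_1=1$ we get $\E[\langle o^t,v^t\rangle\mid v^t]\le\frac{\opt}{n-t+1}$. Summing over $t=1,\dots,n/2$ produces the harmonic tail $\opt\bigl(\frac1n+\cdots+\frac1{n/2}\bigr)\le 2\opt$ — running to the end instead would give $\Theta(\ln n)\opt$ and no improvement, so the reset after $\lfloor n/2\rfloor$ jobs is essential. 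Finally I would optimize the free parameter, taking $a=\frac{\ln\ln m}{6}$ so that $e^{2a}=(\ln m)^{1/3}$ is dominated by $\frac{\ln m}{a}=\frac{6\ln m}{\ln\ln m}$ (using $\opt\ge1$), giving $\E[\|s^{n/2}\|_\infty]=O(\log m/\log\log m)\,\opt$; adding the symmetric bound for the second phase, and undoing the normalization, completes the proof.
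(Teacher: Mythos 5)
Your proposal is correct and follows essentially the same route as the paper: the same normalization via doubling, the same softmax potential with the $e^a$-smoothness of the gradient, the same greedy-vs-optimum potential increment comparison, the same conditional expectation bound $\E[\langle o^t,v^t\rangle\mid v^t]\le \opt/(n-t+1)$ whose harmonic tail is killed by the restart, and the same choice $a=\frac{\ln\ln m}{6}$. One small slip: your intermediate inequality should read $\langle w^t,v^t\rangle\le e^{a}\langle o^t,v^t\rangle$ (one factor of $e^a$ from replacing $\nabla\psi(s^{t-1}+o^t)$ by $e^a\nabla\psi(s^{t-1})$), which combined with the earlier $e^a$ from telescoping yields the $e^{2a}$ in the final bound — as you wrote $e^{2a}$ both per-step and in the conclusion, the per-step factor is off, but the final bound you state is the correct one.
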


\section{Graph Balancing: Lower Bounds}
\label{sec:greedy}

In this section we obtain new lower bounds for the online graph balancing problem, which is a special case of unrelated machines, in the random edge arrival model.  Interestingly, our lower bounds hold for tree instances even when the structure of the tree is fully known to the algorithm a priori. It can be easily observed that in any tree instance, an optimal offline solution always attains a maximum load of 1 by simply rooting the tree at an arbitrary node and orienting every edge away from the root. 
We first present our stronger lower bound, which only holds for the greedy algorithm, and then present a lower bound that holds for any algorithm.

\subsection{Lower Bound for a Greedy Algorithm}
    \label{sec:lb-greedy}

Consider the following natural greedy algorithm.  When an edge $e = (u,v)$ arrives, the algorithm (\textsc{Greedy}) assigns $e$ to the endpoint that has the least load at that time, breaking ties uniformly at random. Here, a node's load is the number of already-arrived edges that have been oriented towards it. \textsc{Greedy} is known to have the optimal competitive ratio of $\Theta(\log n)$ for online graph balancing in the adversarial edge arrival model.

In this section we show a lower bound of $\Omega(\log n / \log \log n)$ for \textsc{Greedy} in the random arrival model even when the input graph is a tree. 
The lower bound instance is the following complete tree.

\paragraph{Lower Bound Instance.} Let $T$ be a rooted tree with root $r$ and depth $k$. Each internal node of $T$  has $d = k^{4}$ children. The total number of nodes in $T$ is thus $n = \Theta(d^k) = \Theta(k^{4k})$; so we have $k = \Theta(\log n/\log \log n)$. We assume that $k$ is sufficiently large. 
\smallskip

Before we discuss the intuition and the analysis, we first set up some notation and terminology.  %\mnote{Do we actually use this parent edge terminology? Also, need to define ``parent'' edge of a node and ``child`` edges of a node.} For any edge $e$, let $p(e)$ denote its unique parent edge; so the parent edge is on the unique path from the root $r$ to $e$, and it is adjacent to $e$. 
For ease of analysis, we pretend that edges adjacent to $r$ also have a parent edge (or equivalently, the tree is rooted at a degree 1 node $r'$ that connects to $r$). An edge's \emph{depth} is defined as the number of edges including itself on the path from the root to the edge. Thus, the leaf edges have depth $k$ and the edges incident to the root have depth $1$. A node $u$'s \emph{height} is defined as the number of edges between $u$ to a closest leaf node. Thus, a leaf edge has two end points of height $1$ and $0$ respectively. Similarly, an edge's height is defined as the height of the node of its end point that is closer to the root.

\subsubsection{Intuition} Suppose edges of $T$ arrive in a specific order, from bottom to top---so all leaf edges appear first, and the edges of height two arrive, and so forth. %Let us assume that, if the two endpoints of an edge have equal load when it arrives, then the edge is assigned randomly to one of its endpoints. 
Consider the $d$ leaf edges that share a parent node. With high probability, one of these $d$ edges will be oriented towards the parent (and all the others arriving afterwards will be directed towards the leaves). So after all the leaf edges arrive, (almost) all nodes of height 1 will have a load of 1. Then when all the edges of the upper level---equivalently the edges of height 2---arrive, a similar argument shows that two of the edges that share a parent will be directed towards the parent, so nodes of height 2 get a load of 2. Continuing this way, the root node will have a load of $k = \Omega(\log n / \log \log n)$. The primary intuition is that by making the degree of each internal node large enough---even when edges arrive in random order---there is a subtree of a similar structure where edges arrive in this specific leaf-to-root order.

\subsubsection{Analysis}

For the sake of analysis, we let each edge $e$ be associated with a random number $r_e$ sampled from $[0, 1]$ uniformly at random. We assume that edge $e$ arrives at time $r_e$. Note that this way we can simulate the random arrival order of the edges.
As discussed before, the greedy algorithm suffers the most when edges arrive in the bottom-to-top order. Thus, we will seek to show that there is a $k^2$-ary subtree where edges arrive in the bottom-to-top order. To make the analysis more convenient, we will only look at trees where edges arrive in a more structured manner. Let $I_1, \ldots I_k$ be an equal partition of $(0, 1)$ into intervals, i.e., $I_1 = (0, 1/k), I_2 = (1/ k , 2/k), \ldots$ and so on.\footnote{Here, we do not distinguish between closed intervals and open intervals as almost surely no edge will arrive at a time that is a multiple of $1/ k$.}

\begin{definition}[Bad node]
    We say that a node $u$ of height $h$ is \emph{bad} if at least $k^2$ of its child edges arrive during $I_h$; let $B_u$ denote this bad event (for the algorithm).
\end{definition}

The following lemma is a straightforward consequence of the Chernoff bound. We bound the probability that the bad event does not occur, i.e., for a node $u$ of height $h$, at most $k^2$ of its child edges arrive during $I_h$.

\begin{lemma}
\label{lem:badnode}
    For a non-leaf node $u$, $\Pr[ B_u] \geq 1 - \frac{1}{e^{k^3/8}}$.   
\end{lemma}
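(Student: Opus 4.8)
The plan is a direct Chernoff-bound computation. Fix a non-leaf node $u$ of height $h$. It has exactly $d = k^4$ child edges, and each child edge $e$ arrives during the interval $I_h$ independently with probability $1/k$, since $r_e$ is uniform on $(0,1)$ and $I_h$ has length $1/k$. Let $X$ be the number of child edges of $u$ that land in $I_h$; then $X$ is a sum of $d = k^4$ i.i.d.\ Bernoulli$(1/k)$ random variables, so $\mu := \E[X] = k^4 / k = k^3$. The bad event $B_u$ is exactly $\{X \geq k^2\}$, so I need to show $\Pr[X < k^2] \leq e^{-k^3/8}$.

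Since $k^2$ is far below the mean $k^3$ (for large $k$), this is a lower-tail deviation. First I would apply the multiplicative Chernoff bound in the form $\Pr[X \leq (1-\delta)\mu] \leq e^{-\delta^2 \mu / 2}$ for $\delta \in (0,1)$. Taking $\delta = 1 - k^2/k^3 = 1 - 1/k$, which lies in $(0,1)$, gives $(1-\delta)\mu = k^2$, hence
\[
\Pr[X \leq k^2] \leq \exp\!\left(-\frac{(1-1/k)^2 \, k^3}{2}\right).
\]
For $k$ sufficiently large, $(1-1/k)^2 \geq 1/2$ (indeed it tends to $1$), so the exponent is at most $-k^3/4 \leq -k^3/8$, which yields $\Pr[\bar B_u] = \Pr[X \leq k^2] \leq e^{-k^3/8}$ and therefore $\Pr[B_u] \geq 1 - e^{-k^3/8}$, as claimed. (One can also absorb any slack from the $(1-1/k)^2$ factor and the distinction between $X < k^2$ and $X \leq k^2$ into the constant; the statement uses $8$ in the denominator precisely to leave this room, so no delicate estimate is needed.)

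There is essentially no obstacle here — the only thing to be careful about is bookkeeping the independence (the events ``$e$ arrives in $I_h$'' across the $d$ child edges of $u$ are independent because the $r_e$ are independent), and choosing the right form of the Chernoff bound so that the constant comes out cleanly. I would state the multiplicative lower-tail Chernoff bound explicitly as the one external fact used, plug in $\mu = k^3$ and the threshold $k^2$, and note that ``$k$ sufficiently large'' (already assumed in the construction) makes $(1-1/k)^2 \geq 1/2$, closing the computation.
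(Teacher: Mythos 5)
Your proof is correct and follows essentially the same route as the paper: define $X$ as the number of child edges of $u$ landing in $I_h$, observe that $X$ is a sum of $k^4$ independent Bernoulli$(1/k)$ variables with mean $\mu = k^3$, and apply the lower-tail multiplicative Chernoff bound with $\delta = 1 - 1/k$ to get $\Pr[X < k^2] \leq e^{-\mu\delta^2/2} \leq e^{-k^3/8}$. (The paper's text has a small typo in the definition of $X_i$ — it says $X_i = 0$ if $e_i$ arrives in $I_h$ and $1$ otherwise, yet then claims $\E[X_i] = 1/k$; your version, with $X_i = 1$ on arrival in $I_h$, is the intended one. Also, the paper absorbs the slack via $(1-1/k)^2 \geq 1/4$ for $k \geq 2$, whereas you use $(1-1/k)^2 \geq 1/2$ for larger $k$; both suffice given the stated bound with $8$ in the denominator and the assumption that $k$ is large.)
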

\begin{proof}
Consider a non-leaf node $u$ with height $h$. We want to calculate the probability that at most $k^2$ child edges arrive in the interval $I_{h}$. Let $\{e_i\}_{i=1}^{k^4}$ denote the $k^4$ child edges of node $u$. For any child edge $e_i$, let $X_i$ denote the random variable that is 0 if $e_i$ arrives in $I_h$ and 1 otherwise. Define $X = \sum_{i=1}^{k^4} X_{i}$. Since each edge $e_i$ arrives at $r_{e_i} \in [0,1]$ independently and uniformly at random, we have $\mathbb{E}[X_i] = 1/k,\ \forall e_i$. Applying the standard Chernoff bound (Theorem~\ref{thm:chernoff} in Appendix~\ref{sec:chernoff} with $\mu = k^3$ and $\delta = 1 - \frac{1}{k}$), we get 
$\Pr[X < k^2]  =  e^{-k^3 \delta^2 / 2} \leq e^{-k^3 / 8}$ for any $k \geq 3$. 
\end{proof}

\begin{definition}[Bad subtree]
    We say a rooted subtree $T'$ of $T$ is \emph{bad} if it is a full $k^2$-ary tree of height $k$ whose  internal nodes are all bad.
\end{definition}

By definition, the edges of a \emph{bad subtree} arrive in a leaf-to-root order, i.e, for any internal node in the tree, all child edges arrive before the parent edge. We now argue that at least one bad subtree exists with high probability.

\begin{lemma}
\label{lem:badsubtree}
    There exits a bad subtree $T'$ with probability at least $1- \frac{2 k^{2k}}{e^{k^3/8}}$. 
\end{lemma}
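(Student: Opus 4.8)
\textbf{Proof proposal for Lemma~\ref{lem:badsubtree}.}

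The plan is to build a bad subtree greedily from the root $r$ downward, and to control the failure probability by a union bound over all the internal nodes that the construction could possibly touch. Recall that every internal node of $T$ has $d = k^4$ children, that a node $u$ of height $h$ is bad (event $B_u$) when at least $k^2$ of its child edges arrive during the time interval $I_h$, and that by Lemma~\ref{lem:badnode} each single node is bad with probability at least $1 - e^{-k^3/8}$ (and these events are \emph{not} independent across nodes, but the union bound does not care). The key combinatorial observation is that $d = k^4 \geq k^2$, so whenever a node $u$ on our partial tree is bad, we can select $k^2$ of its children whose child edges landed in $I_h$ and recurse into exactly those $k^2$ subtrees; this keeps the branching factor at exactly $k^2$ and, since height decreases by one at each level, the interval index decreases from $I_h$ to $I_{h-1}$ as we descend, so the edges of the resulting subtree indeed arrive in leaf-to-root order.

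First I would make the construction precise as a top-down procedure: start with $r$ at height $k$; if $B_r$ holds, pick $k^2$ children of $r$ realizing the event and mark them as the height-$(k-1)$ layer of $T'$; then for each such child $v$, if $B_v$ holds pick $k^2$ of its children realizing $B_v$, and so on down to height $1$ (the leaves of $T'$, whose child edges are the original leaf edges of $T$). This procedure succeeds in producing a bad subtree exactly when every internal node it visits is bad. Next I would bound the number of internal nodes the procedure can visit: the visited internal nodes form a $k^2$-ary tree of height $k$, so there are $1 + k^2 + k^4 + \cdots + k^{2(k-1)} \le 2 k^{2(k-1)} \le 2k^{2k}$ of them (the bound $\sum_{i=0}^{k-1} k^{2i} \le 2k^{2(k-1)}$ holds for $k\ge 2$). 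Crucially, which nodes get visited depends on the random arrival times, but the \emph{set} of nodes that could ever be visited — the internal nodes of depth at most $k-1$, together with more than we need — is at most all internal nodes of $T$; in any case the realized visited set has size at most $2k^{2k}$. Then I would apply the union bound: the procedure fails only if at least one visited internal node $u$ has $\overline{B_u}$, and by Lemma~\ref{lem:badnode} each such node fails with probability at most $e^{-k^3/8}$, so
\[
\Pr[\text{no bad subtree}] \le \Pr[\exists \text{ visited } u : \overline{B_u}] \le 2k^{2k}\cdot e^{-k^3/8} = \frac{2k^{2k}}{e^{k^3/8}},
\]
which is the claimed bound.

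The one subtlety I would be careful about — and which I expect to be the main obstacle to state cleanly — is that the union bound must range over a set of nodes that is fixed \emph{before} looking at the randomness, since the identity of the visited nodes is itself random. The clean way around this is to observe that it suffices to union-bound over \emph{all} internal nodes of $T$ at depth at most $k-1$ (a deterministic set of size at most $1 + d + \cdots + d^{k-2} \le 2d^{k-2} = 2k^{4(k-2)}$, which is still far smaller than, and in particular at most, $2k^{2k}$ is false here, so instead) — actually the tidiest fix is different: condition on the event $\mathcal{E}$ that \emph{every} internal node of $T$ is bad; then $\Pr[\overline{\mathcal E}] \le (\text{number of internal nodes of } T)\cdot e^{-k^3/8} \le 2k^{4k}\cdot e^{-k^3/8}$, and on $\mathcal E$ the greedy construction trivially produces a bad subtree. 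If one wants the sharper constant $2k^{2k}$ stated in the lemma, one instead argues directly that the greedy construction only ever needs the $\le 2k^{2k}$ nodes it actually touches, and that each touched node was bad (else the construction would have failed at an ancestor, contradiction) — formally, define the partial construction level by level, and at level $i$ (having $k^{2i}$ nodes chosen) note each chosen node is bad with failure probability $e^{-k^3/8}$, so summing $\sum_{i=0}^{k-1} k^{2i} e^{-k^3/8} \le 2k^{2k} e^{-k^3/8}$ by a level-wise union bound, which is valid because at level $i$ the set of chosen nodes is determined by the randomness at levels $>i$ only for the \emph{arrival times}, but their badness is still each individually bounded by Lemma~\ref{lem:badnode} irrespective of conditioning — I would double-check this last point carefully, as it is the only place where the non-independence of the $B_u$'s interacts with the adaptive choice of subtree, and present whichever of the two arguments (the crude $\mathcal E$ version or the sharp level-wise version) gives the stated constant.
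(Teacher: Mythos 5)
Your proof takes essentially the same approach as the paper: build the bad subtree greedily top-down and control failure by a level-wise union bound, summing $\sum_{i=0}^{k-1} k^{2i} \cdot e^{-k^3/8} \le 2k^{2k}e^{-k^3/8}$. The paper states this exactly as your ``sharper'' second variant (condition on $B_r$, pick $k^2$ qualifying children, union-bound over them that they are all bad, recurse), and your final bound matches.

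The one place you hedge is exactly the point you should \emph{not} hedge: you write that the events $B_u$ ``are \emph{not} independent across nodes.'' In fact they are mutually independent. The event $B_u$ depends only on the arrival times $\{r_e\}$ of $u$'s \emph{child} edges, and for distinct nodes $u \ne u'$ these are disjoint sets of edges with i.i.d.\ arrival times. This independence is precisely what licenses the level-wise conditioning you worried about: conditioning on $B_r$ (and on which $k^2$ children were selected) only constrains the arrival times of the root's child edges, which are disjoint from the child edges of the selected $u_i$'s, so $\Pr[\neg B_{u_i} \mid B_r, \text{choice of } u_1,\ldots,u_{k^2}] \le e^{-k^3/8}$ holds by Lemma~\ref{lem:badnode} applied unconditionally. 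So the level-wise argument is fully rigorous, and the crude alternative of conditioning on \emph{every} internal node of $T$ being bad (with its worse $2k^{4k}$ factor) is unnecessary. With that clarified, your proposal is correct and aligns with the paper's proof.
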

\begin{proof}

The root is good with probability at most $\rho := 1 / e^{k^3/ 8}$. Conditioned on the root being bad, consider some arbitrary $k^2$ child edges that all arrive during interval $I_k$ and let $u_1, \ldots u_{k^2}$ be the corresponding child nodes. By Lemma\ref{lem:badnode}, each of those $u_i$'s is good with probability at most $\rho$. So by a union bound, the probability that at least one of those nodes is good is at most $k^2 \rho$.

Using this argument recursively and applying a union bound over all levels, the probability that we have no bad subtree is at most
$\rho (1 + k^2 + \cdots + k^{2k}) \leq 2k^{2k} / e^{k^3/ 8}$.
\end{proof}

\begin{lemma}
    \label{lem:bad-tree-alone}
    Suppose the input graph is a bad tree, with no other edges. In other words, suppose the input graph is a full $k^2$-ary tree of height $k$ where edges in the bottom-to-top order. Then, the maximum load of a node is at least $k$ with a probability at least $1 - 1/ n$.
\end{lemma}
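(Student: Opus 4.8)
The plan is to analyze the greedy algorithm on a single full $k^2$-ary tree of height $k$ in which edges arrive in strict bottom-to-top order (all depth-$k$ leaf edges first, then depth-$(k-1)$ edges, and so on), and show by induction on the height $h$ that after all edges of height $\le h$ have been processed, every node of height $h$ has load at least $h$, except with small failure probability. The base case $h=0$ is trivial (leaf nodes, load $0$). For the inductive step, fix a node $u$ of height $h+1$ with its $k^2$ children $u_1,\dots,u_{k^2}$, each of which (by the inductive hypothesis, and a union bound over the relevant subtrees) has load exactly $h$ just before the height-$(h{+}1)$ edges arrive. Within the block of height-$(h{+}1)$ edges, the $k^2$ child edges $(u,u_i)$ arrive in a uniformly random internal order. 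When $(u,u_i)$ arrives, \textsc{Greedy} compares the load of $u$ with the load of $u_i$; since $u_i$ currently has load $h$ plus however many of its sibling-directed edges have already pointed into it (none, since those come from $u$'s other children, which are in different subtrees) — more precisely $u_i$'s load stays at $h$ throughout this block until possibly the edge $(u,u_i)$ is oriented into it — the edge is oriented toward $u$ precisely when $u$'s current load is at most $h$ (ties broken randomly, but once $u$ reaches load $h+1$ every remaining child edge prefers $u_i$). So $u$'s load increases to exactly $h+1$ as soon as the first $h+1$ arriving child edges all get routed to $u$; since $u$ starts this block at load $h$... wait — $u$ could have received edges from below? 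No: $u$ has height $h+1\ge 1$, its only incident edges are the $k^2$ edges to its children and its single parent edge, and the parent edge arrives strictly later. Hence $u$ enters the block with load $0$ contribution from those, i.e. load $0$; but its children have load $h$, so the very first child edge to arrive sees $\mathrm{load}(u)=0 \le h = \mathrm{load}(u_i)$ and is oriented toward $u$; this continues until $u$ has load $h+1$, which happens after $h+1$ child edges, and since $k^2 \ge h+1$ for all $h < k$ (using $k^2 \ge k$), $u$ indeed reaches load $h+1$ deterministically given the inductive hypothesis on its children. Taking $h = k$ gives load $k$ at the root.

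The key steps in order: (1) state the bottom-to-top processing structure and observe that \textsc{Greedy}'s decision on an edge $(u,u_i)$ depends only on the current loads of $u$ and $u_i$, and that within a height block a child $u_i$'s load is frozen at its pre-block value until its own edge to $u$ is (possibly) oriented into it; (2) set up the induction on height $h$ with hypothesis ``every height-$h$ node has load $\ge h$ after the height-$\le h$ edges are processed''; (3) carry out the inductive step as above, noting $u$ starts each block with zero load from child/parent edges and therefore greedily absorbs the first $\min(k^2, h+1) = h+1$ arriving child edges; (4) union-bound over all nodes — there are at most $2k^{2k} \le 2n$ of them — but in fact in the clean single-tree setting each step is deterministic, so \emph{no} failure probability is incurred here at all, giving the stronger conclusion that the max load is $\ge k$ with probability $1$; the $1-1/n$ slack in the statement is presumably there to absorb the a.s.-uniqueness caveats about two edges arriving at the same time (events of probability $0$) or, more likely, is stated loosely for later convenience when this lemma is combined with Lemma~\ref{lem:badsubtree}, where the random arrival times must actually realize the bottom-to-top order within the bad subtree.

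The main obstacle I anticipate is bookkeeping around ties and the precise definition of ``load at the time of arrival'': I need to be careful that when multiple child edges of $u$ want to orient toward $u$, the tie-breaking randomness never diverts enough of them away to prevent $u$ from reaching load $h+1$. The clean argument is that $u$'s load only goes \emph{up}, and as long as it is $\le h$ it is strictly less than each child's load $h$... no, it is $\le h$ which could \emph{equal} $h$, triggering a random tie-break. So I should instead argue: order the arriving child edges $e_1, e_2, \dots$; $e_1$ sees $\mathrm{load}(u) = 0 < h$ (assuming $h \ge 1$), oriented to $u$; inductively $e_j$ for $j \le h$ sees $\mathrm{load}(u) = j-1 < h$ (strict, since $j - 1 \le h-1 < h$), so oriented to $u$; then $e_{h+1}$ sees $\mathrm{load}(u) = h$ — a tie — but regardless of how this single tie breaks, $u$ already has load $h$, and if $e_{h+1}$ happens to go to $u$ we are done immediately, while if it goes to $u_{h+1}$ then $e_{h+2}$ still sees $\mathrm{load}(u) = h$ and we repeat; since there are $k^2 \ge h+1$ child edges and each tie is resolved toward $u$ with probability $\tfrac12$ independently, the probability that \emph{none} of the $\ge k^2 - h$ tied edges goes to $u$ is at most $2^{-(k^2-h)} \le 2^{-(k^2-k)}$, which is where a genuine (tiny) failure probability enters. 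A union bound over the $\le 2n$ internal nodes then gives failure probability at most $2n \cdot 2^{-(k^2-k)} \le 1/n$ for $k$ large, matching the stated bound. I would present exactly this version.
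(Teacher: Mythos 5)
Your final argument matches the paper's proof: for a node $u$ of height $h{+}1$, the first $\approx h$ arriving child edges are oriented to $u$ deterministically, the remaining $\approx k^2-h$ are (at worst) ties each resolved toward $u$ with probability $1/2$, giving per-node failure probability $\le 2^{-(k^2-k)}$, and a union bound over the $\lesssim k^{2k}$ internal nodes yields $1-1/n$. You correctly caught your own false start (thinking the process was fully deterministic — the ties \emph{are} where the randomness enters); the only remaining nit is that the inductive hypothesis gives the children load $\ge h$ rather than exactly $h$, but a child with higher load only forces its edge toward $u$, so this is the worst case and the argument stands.
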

\begin{proof}
    We say a node $u$ of height $h$ is fully loaded if its load is at least $h$. We denote the event as $F_u$. Let $f_h := \Pr[F_u \; | \; F_v \;\; \forall v \in C_u]$ for a node of height $h$; here $C_u$ denotes $u$'s children (nodes). 
        
    We show that $c_h \geq 1-  \frac{1}{2^{k^2 - h}}$. Indeed, consider an arbitrary node $u$ of height $h$. Since we assume that edges arrive in the bottom-to-top order, before any child edges of $u$ arrives, we know that every child node $v$ of $u$ has load at least $h-1$ from the condition $F_v, \forall v \in C_u$. So, the first $h-1$ child edges of $u$ will orient towards $u$. For any other edge $(u, v)$ arriving later, if $v$ already has load more than $h-1$, the claim is immediate. So, suppose not. Then, $u$ can have load $h-1$ only when all the child edges of $u$, except the first $h-1$, are not oriented towards $u$, which occurs with probability $\frac{1}{2^{k^2 - h}} \leq \frac{1}{2^{k^2 - k}}$.

    The bad tree has at most $(k^2)^{k+1}$ internal nodes, and the root has load at least $h$ if $F_u$ occurs for every internal node $u$. Thus, by a simple union bound,
    the root has load at least $k = \Theta(\log n / \log \log n)$ with probability at least $1 - (k^2)^{k+1} \frac{1}{2^{k^2 - k}}  > 1 - 1/ n$. 
\end{proof}

We now explain why we continue to have the lower bound of $k$, independent of the other edges different from the bad tree. This follows from the monotonicity of the greedy algorithm. In other words, inserting an edge (job) to an input sequence can only increase the maximum load. 

Since $k = \Omega(\log n / \log \log n)$, we have the following theorem.

\begin{theorem}
    There exists a tree $T$ with $n$ nodes such that, for uniform random edge arrivals, the greedy algorithm outputs an orientation with maximum load $\Omega(\log n / \log \log n)$ with high probability.
\end{theorem}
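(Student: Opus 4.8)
The plan is to combine the three lemmas already established---Lemma~\ref{lem:badnode}, Lemma~\ref{lem:badsubtree}, and Lemma~\ref{lem:bad-tree-alone}---with the monotonicity of \textsc{Greedy} to conclude. First I would invoke Lemma~\ref{lem:badsubtree} to assert that, with probability at least $1 - 2k^{2k}/e^{k^3/8}$, the random arrival times $\{r_e\}$ induce a bad subtree $T'$: a full $k^2$-ary subtree of height $k$ whose internal nodes are all bad, so that within $T'$ every child edge arrives strictly before its parent edge (each internal node of height $h$ has its relevant $k^2$ child edges landing in interval $I_h$, and the intervals are nested in leaf-to-root order). This is exactly the ``reproduce the adversarial order'' step that the intuition section describes.

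Next I would condition on the existence of such a $T'$ and argue that the behavior of \textsc{Greedy} restricted to the edges of $T'$ is identical to its behavior when run on $T'$ \emph{alone} in bottom-to-top order, up to the additional edges only ever raising loads. Concretely, the monotonicity observation (inserting a job into the input sequence can only increase the maximum load, because each of \textsc{Greedy}'s decisions compares the two current endpoint loads, and extra edges outside $T'$ can only inflate those loads, making an orientation into $T'$ at least as likely) lets me lower-bound the load produced on $T'$ within the full instance by the load \textsc{Greedy} would produce on the isolated bad tree. I would need to be slightly careful here: the ``coupling'' is not literally edge-by-edge identical, but the relevant statement is that for every internal node of $T'$ the first few child edges still get oriented toward it, and the probability that an endpoint escapes receiving load is still at most what Lemma~\ref{lem:bad-tree-alone} bounds---so I would phrase the monotonicity argument as a stochastic domination of the maximum load on $T'$ rather than an exact simulation. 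Then Lemma~\ref{lem:bad-tree-alone} gives that, conditioned on $T'$ being bad, the root of $T'$ attains load at least $k$ with probability at least $1 - 1/n$ (where $n$ counts nodes, and $(k^2)^{k+1} \le n$ comfortably).

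Finally I would union-bound the two failure events: no bad subtree exists (probability $\le 2k^{2k}/e^{k^3/8}$) or a bad subtree exists but its root fails to reach load $k$ (probability $\le 1/n$). Both terms are $o(1)$ for large $k$---indeed $2k^{2k}/e^{k^3/8} = e^{2k\ln k - k^3/8 + \ln 2} \to 0$---so with probability $1 - o(1)$ the maximum load of \textsc{Greedy}'s orientation is at least $k$. Translating back, $n = \Theta(k^{4k})$ gives $\ln n = \Theta(k \ln k)$ and hence $k = \Theta(\log n/\log\log n)$, yielding the claimed $\Omega(\log n/\log\log n)$ bound with high probability, which (since the instance is a tree, so $\opt = 1$) is exactly the competitive-ratio statement.

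\textbf{Main obstacle.} The delicate point is the second step: making precise that \textsc{Greedy} on the full tree $T$, restricted to the bad subtree $T'$, does at least as badly as \textsc{Greedy} on $T'$ in isolation. The subtlety is that $T'$ shares nodes with the rest of $T$, edges outside $T'$ may arrive interleaved with those of $T'$, and \textsc{Greedy}'s tie-breaking is randomized; I expect the cleanest route is to re-derive the inductive load bound of Lemma~\ref{lem:bad-tree-alone} directly inside $T$ (observing that every external contribution to an endpoint's load only helps orient $T'$-edges inward), rather than to try to force a literal coupling with the isolated process.
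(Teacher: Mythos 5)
Your proposal is correct and follows essentially the same route as the paper: invoke Lemma~\ref{lem:badsubtree} to get a bad $k^2$-ary subtree with high probability, apply (the argument of) Lemma~\ref{lem:bad-tree-alone} to push load $k$ to its root, union-bound the two failure events, and translate $n = \Theta(k^{4k})$ into $k = \Theta(\log n/\log\log n)$. Your ``Main obstacle'' paragraph correctly flags the one step the paper treats tersely---the monotonicity claim that extra edges cannot decrease the load \textsc{Greedy} places on the bad subtree---and your suggested fix (re-run the induction of Lemma~\ref{lem:bad-tree-alone} inside $T$, noting that external load at any node of $T'$ only accelerates it toward being fully loaded) is the right way to make that step rigorous.
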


\subsection{Lower Bound for Arbitrary Algorithms}
\label{sec:lb-general}

In this section we show a lower bound of $\Omega(\sqrt{\log n})$ for the graph balancing problem with random edge arrivals. Unlike Section~\ref{sec:lb-greedy}, this lower bound is not restricted to the greedy algorithm and holds for all online algorithms.

\vspace{-2mm}
\paragraph{Lower Bound Instance.} We denote the lower bound instance of height $D$ as $T_D$, where $D \geq 0$ is an integer. For every $0 \leq d \leq D$, $T_d$ is constructed recursively as follows: $T_0$ is a singleton node. Let $r$ be the root of $T_d$. For every $d' = 0, 1, \ldots, d-1$, there are $2^{D-d'}$ copies of $T_{d'}$, each of which is a subtree of $r$. Here, it is important to note that $T_1, \ldots, T_D$ have dependency on $D$. See  Figure~\ref{fig:gen-lb-2} for an illustration of the instance $T_2$. 

\vspace{2mm}

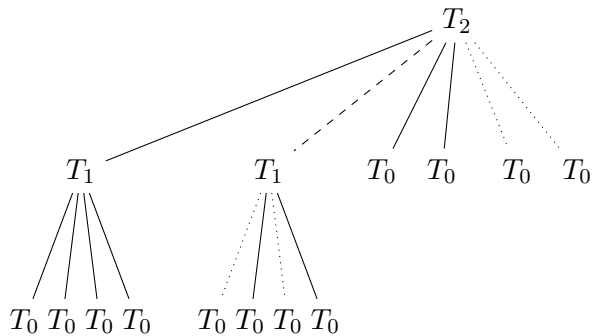
\begin{figure}[htb]
    \centering

\begin{tikzpicture}
  [level 1/.style={sibling distance=20mm}, level distance=10mm,
   level 2/.style={sibling distance=5mm}, level distance=30mm, 
   level 3/.style={sibling distance=5mm}, level distance=20mm]
  \node  (root) {$T_2$}
     child {node {$T_1$} {child {node {$T_0$}} child {node {$T_0$}} child {node {$T_0$}} child  {node {$T_0$}}}} child [dashed,xshift=5mm] {node [black]  {$T_1$} {child [dotted] {node [solid] {$T_0$}} child [solid] {node {$T_0$}} child [dotted] {node [black] {$T_0$}} child [solid] {node {$T_0$}}}}
    child {node {$T_0$}} child [xshift=-12mm] {node {$T_0$}} child [xshift=-22mm, dotted] {node [black] {$T_0$}} child  [xshift=-34mm, dotted]  {node [black] {$T_0$}} ;
   
\end{tikzpicture}
\caption{Example of a lower bound instance for $D=2$. Consider the time when the dashed edge appears. Edges are dotted if they have already arrived, solid otherwise.  For the dashed edge, no algorithm can distinguish which of its end points is the parent node.}
    \label{fig:gen-lb-2}
%    \Description{A graph that exhibits the lower bound instance for small value of $D$}
\end{figure}

\smallskip
We label each node $v$ in $T_D$ with an integer in $[0, D]$: if the sub-tree rooted at $v$ is a copy of $T_d$, then we set $\ell(v) = d$ to be the label of $v$. Therefore, for two integers $0 \leq d' < d \leq D$, any node $u$ with label $d$ has $2^{D - d'}$ children with label $d'$. 

\subsubsection{Intuition}

The main idea behind this lower bound instance is that when some edges arrive online, the algorithm cannot distinguish between their parent node and the child node. Therefore, we can assume that the algorithm assigns such an edge $e = (u, v)$ to $u$ with probability $1/2$ and to $v$ with probability $1/2$. In particular, consider any node $v$ of with label $d$.
Then, for each $d' \in [0, d-1]$, there is at least one edge $(v, u)$ with $\ell(u) =d'$, such that with probability at least $1/2$ the algorithm cannot distinguish the subtree rooted at $u$ consisting of the edges that have arrived so far, and the analogously defined subtree rooted at $v$. Using this observation and the fact that $D = \Omega(\sqrt {\log n})$, we can lower bound the maximum load on the root of the tree. 

\subsubsection{Analysis}

We begin by defining the event when the algorithm cannot distinguish between the parent and child node. 

\begin{definition}
    Let $\pi$ be a random permutation over all edges in $T_D$. Consider an edge $e = (u,v)$ with $u$ being the parent. The permutation $\pi$ is \emph{bad} for $e$, if the following hold:
	\begin{itemize}
		\item the parent edge of $u$ appears after $e$ in the order $\pi$, and
		\item all other child edges $(u,v')$ of $u$ with $\ell(v') \geq \ell(v)$ appear after $(u,v)$ in the order $\pi$. 
	\end{itemize}
\end{definition}

 Consider the arrival time of an edge $e = (u,v)$ where $\ell(u) = d$ and $\ell(v) = d' < d$. We observe that conditioned on the event that the permutation $\pi$ is bad for edge $(u,v)$, the subtrees rooted at $u$ and $v$ comprised only of the edges that have arrived so far are stochastically identical. Thus, we can assume wlog that any fixed algorithm orients the edge $(u,v)$ towards each of its end points with probability $\frac{1}{2}$, which can be formally shown in the following claim. Note that it suffices to consider a fixed deterministic algorithm thanks to Yao's minmax principle \cite{borodin2005online}. 

 \begin{claim}
     Conditioned on the event that the permutation $\pi$ is bad for edge $(u,v)$, any fixed deterministic algorithm orients the edge towards each of its end points with probability $\frac{1}{2}$.
 \end{claim}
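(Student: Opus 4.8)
The plan is to establish the claim via a symmetry (coupling) argument on the conditional distribution of the input, exploiting the recursive self-similarity of $T_D$. Fix a deterministic algorithm $\cA$ and an edge $e = (u,v)$ with $\ell(u) = d$ and $\ell(v) = d' < d$. Condition on the event that $\pi$ is bad for $e$ and, more strongly, condition on the entire history seen by $\cA$ up to (but not including) the arrival of $e$: this history consists of the set $S$ of edges that arrived before $e$, together with their arrival order and the identities of their endpoints. The decision $\cA$ makes for $e$ is a deterministic function of this history together with the (unordered) pair $\{u,v\}$. So it suffices to show that the history is invariant, in distribution, under the map that swaps the roles of $u$ and $v$ — i.e., that there is a measure-preserving involution on the conditioned probability space of permutations which fixes the set of already-arrived edges' ``shapes'' but interchanges which endpoint of $e$ plays the role of parent.

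The key structural observation is the following. Because $\pi$ is bad for $e$, every edge of $S$ that is incident to $u$ and leads into a subtree of label $\geq d'$ has \emph{not} yet arrived; hence the portion of $S$ lying ``below $u$'' (on the $u$-side of $e$) is entirely contained in the union of the $u$-rooted subtrees of label $< d'$, together with whatever has arrived strictly inside one label-$\geq d'$ child subtree of $u$ — but by the badness condition none of the \emph{edges from $u$} into those have arrived, so no edge of $S$ reaches into them at all. On the other side, $v$ is the root of a copy of $T_{d'}$, all of whose edges lead into subtrees of label $< d'$. Now recall that by construction a node of label $d$ has, for every $d'' < d'$, exactly $2^{D - d''}$ children of label $d''$, and a node of label $d'$ also has exactly $2^{D-d''}$ children of label $d''$ for every $d'' < d'$. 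Therefore the collection of $u$-rooted subtrees of label $< d'$ and the collection of $v$-rooted subtrees of label $< d'$ are \emph{identical as multisets of rooted trees}. Fix any bijection $\phi$ between them that is an isomorphism on each matched pair; extend $\phi$ to an involution by also mapping in the reverse direction and letting it fix everything else in $T_D$ (in particular fixing $e$, $u$, $v$, and the parent edge of $u$). This $\phi$ induces a map on permutations: apply $\phi$ to the endpoints of each edge. Because $\phi$ is a graph automorphism of the relevant ``already-eligible'' part and preserves labels, it sends bad-for-$e$ permutations to bad-for-$e$ permutations, it preserves the arrival time of every edge, and it preserves the isomorphism type of the history $S$ — but it exchanges the parent-side and child-side views of $e$. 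Since $\phi$ is measure-preserving on the uniform distribution over permutations (it just relabels edges via a bijection) and is an involution, the conditional distribution of the history, given badness, is symmetric under the $u \leftrightarrow v$ swap. Consequently $\cA$'s deterministic rule outputs ``orient towards $u$'' on exactly half the mass and ``orient towards $v$'' on the other half, which is the claim. (Yao's principle, invoked in the surrounding text, then lets us lift this to randomized algorithms.)

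I would organize the write-up as: (i) reduce to a deterministic algorithm and to conditioning on the full pre-$e$ history; (ii) use the badness event to pin down exactly which subtrees around $u$ can contain arrived edges, and observe these are the label-$<d'$ child subtrees of $u$; (iii) invoke the recursive definition of $T_D$ to exhibit the multiset isomorphism between $u$'s and $v$'s label-$<d'$ child subtrees; (iv) build the involution $\phi$ on edges and check it is measure-preserving, preserves arrival times, preserves badness, and swaps the two endpoint-roles of $e$; (v) conclude by the symmetry that the orientation probability is $\tfrac12$.

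The main obstacle I anticipate is step (ii)–(iii): making fully precise which edges are ``eligible'' to have arrived by time $r_e$ under the badness event, and verifying that the $u$-side and $v$-side eligible subtrees really do match as \emph{labeled, rooted} trees — the asymmetry is that $u$ also has high-label child subtrees hanging off it, so one must argue carefully that badness forbids any arrived edge from touching those, leaving only the label-$<d'$ part, which is exactly what $v$'s subtree looks like. Once that combinatorial matching is nailed down, the coupling/involution argument and the measure-preservation check are routine.
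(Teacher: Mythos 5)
Your high-level plan --- exhibit a measure-preserving involution on the conditioned probability space that swaps the roles of $u$ and $v$ in the algorithm's view --- is the same symmetry argument the paper makes, and with care it can be made rigorous. However, as written there are two substantive gaps.

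First, and most importantly, your proof does not invoke the random relabeling of node identities (the $\psi$ in the paper's proof), and in fact explicitly says the history handed to $\cA$ includes ``the identities of their endpoints.'' In that model the claim is simply false: a deterministic algorithm that sees true node identities knows the rooted structure of $T_D$ and can always orient every edge away from the root, so it never orients $(u,v)$ toward $u$. The claim only holds when identities are hidden and the algorithm observes the arrived subgraph up to a uniformly random relabeling; the symmetry you appeal to in step (v) must be a symmetry of the algorithm's \emph{view}, which requires $\psi$. Any correct write-up has to state this information model and build the involution on the joint space $(\psi,\pi)$, not on $\pi$ alone.

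Second, the map $\phi$ you construct is not a graph automorphism of $T_D$, so ``apply $\phi$ to the endpoints of each edge'' does not yield a well-defined permutation of the edge set. Concretely, if $w$ is a label-$d''$ child of $u$ and $\phi(w)=w'$ is the matched child of $v$, then $\phi$ sends the edge $(u,w)$ to the pair $(\phi(u),\phi(w))=(u,w')$, which is not an edge of $T_D$. The intended edge-bijection has to be built by hand: pair the connector edge $(u,w)$ with $(v,w')$ (and vice versa), pair internal edges of the subtree rooted at $w$ with internal edges of the subtree rooted at $w'$ via the subtree isomorphism, and fix all remaining edges. Once this is done one must then verify it preserves the badness event and the arrival-time distribution, which does go through since $e$, the parent edge of $u$, and the high-label child edges of $u$ are all fixed. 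As a smaller point, your intermediate statement that ``no edge of $S$ reaches into'' the label-$\ge d'$ child subtrees of $u$ is incorrect: badness only forbids the \emph{connector} edges $(u,v')$ with $\ell(v')\ge d'$ from having arrived; edges strictly inside those subtrees may well have arrived. What you actually need --- and what the argument uses --- is that those arrived edges are disconnected from $u$ in the observed graph, and that the involution fixes them, so they cannot break the symmetry once $\psi$ anonymizes them.
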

 \begin{proof}  
    Let $\psi$ be a random permutation of the node identities that is generated by the adversary, which is unknown to the deterministic algorithm. Let $\pi$ be the random permutation of edges that denotes the arrival order. We condition on the event that the arrival order $\pi$ is bad for some specific edge $(u,v)$. Note that the algorithm makes its decisions based on revealed edges of the form $(\psi(x), \psi(y))$. By our construction, the subtrees $T_u$ and $T_v$ rooted at $u$ and $v$ respectively consisting of arrived edges are stochastically identical. Suppose the algorithm orients the revealed edge $(\psi(u), \psi(v))$ towards $u$ in a specific instantiation. Since $T_u$ and $T_v$ are stochastically identical, they will have their instantiations swapped equiprobably, meaning that the algorithm must orient $(u,v)$ towards $v$ with the same probability (over the random choice of $\psi$ and $\pi$). 
 \end{proof}

\smallskip
In the following lemma, we calculate the probability that a permutation is bad for an edge. For our purpose, it is enough to consider an edge with the root $r$ of $T_D$ as one of its end points.

\begin{lemma}
    A permutation $\pi$ over the edges of $T_D$ is bad for the first appearing edge $(r, u)$ with $\ell(u) = d$ with probability at least $\frac{1}{2}$.
\end{lemma}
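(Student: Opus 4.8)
The plan is to fix the root $r$ with label $D$ and the specific child-label $d \in [0, D-1]$, and condition on which edge $(r,u)$ with $\ell(u) = d$ is the \emph{first} among all edges incident to $r$ that have label-$d$ child (call this edge $e^\star$, with endpoint $u^\star$). I want to show that, conditioned on $e^\star$ being the first-appearing such edge, $\pi$ is bad for $e^\star$ with probability at least $\tfrac12$. Recall the definition: $\pi$ is bad for $e^\star=(r,u^\star)$ if (i) the parent edge of $r$ appears after $e^\star$, and (ii) every other child edge $(r,v')$ of $r$ with $\ell(v') \ge \ell(u^\star) = d$ appears after $e^\star$.

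First I would enumerate exactly which edges are "dangerous," i.e.\ which edges must appear after $e^\star$ for $\pi$ to be bad for it. These are: the unique parent edge of $r$, together with all child edges $(r,v')$ of $r$ with $\ell(v') \ge d$. By the construction of $T_D$, the number of children of $r$ with a given label $d'' \in [0, D-1]$ is $2^{D - d''}$, so the number of children with label $\ge d$ is $\sum_{d''=d}^{D-1} 2^{D-d''} = 2^{D-d+1} - 2$. Adding the single parent edge, the set $S$ of edges that must come after $e^\star$ (other than $e^\star$ itself) has size $|S| = 2^{D-d+1} - 1$.

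The key step: among the edges in $S \cup \{e^\star\}$, the event "$e^\star$ arrives first" has probability $1/(|S|+1) = 2^{-(D-d+1)}$, which on its own is far too small. The fix is to \emph{not} demand $e^\star$ be globally first in $S\cup\{e^\star\}$, but only that $e^\star$ precede everything in $S$ \emph{given} that $e^\star$ is already known to precede every other label-$d$-or-smaller... wait — more carefully: I condition on the event $\mathcal{E}$ that $e^\star$ is the earliest among all edges $(r,v')$ with $\ell(v') \le d$ (equivalently, $e^\star$ is the first arriving edge of the form $(r,u)$ with $\ell(u)=d$, as in the lemma statement, provided no edge to a strictly-smaller-label child comes before it — actually the lemma says "the first appearing edge $(r,u)$ with $\ell(u)=d$", so I condition precisely on $e^\star$ being first among label-$d$ children). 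Then, conditioned on $\mathcal{E}$, the remaining relevant edges are the parent edge of $r$ and the children of $r$ with label $\ge d+1$, whose count is $\sum_{d''=d+1}^{D-1} 2^{D-d''} = 2^{D-d} - 2$; together with the parent edge that is a set $S'$ of size $2^{D-d}-1$. The conditional probability that $e^\star$ precedes all of $S'$ is $1/(|S'|+1) = 2^{-(D-d)}$ — still too small.

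So the actual argument must be more clever, and I believe this is the main obstacle: the bound $\tfrac12$ cannot come from a naive "first among a large set" computation. Instead I expect the intended approach exploits a \emph{telescoping / recursive} structure — one shows that the label-$d$ child edge that arrives first is, with probability $\ge \tfrac12$, such that all same-or-higher-label siblings of $r$ and the parent edge come later, by a union bound that is summed geometrically: $\Pr[\text{some sibling of label } d' \ge d \text{ beats } e^\star \mid e^\star \text{ first among label-}d] \le \sum_{d' > d} \frac{2^{D-d'}}{2^{D-d'} + 1} \cdot (\text{something shrinking})$, or alternatively by directly computing, for the first-arriving edge among \emph{all} children of $r$ of label $\le d$ (not just label exactly $d$), the chance it sits before every child of label $\ge d$ and before the parent, using that a uniformly random element of the smaller-label block being first forces favorable ordering against the larger-label block with probability governed by block sizes $2^{D-d'}$. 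The clean way: let $A$ = set of children of $r$ with label $\le d$ ($|A| = \sum_{d'' \le d} 2^{D-d''} = 2^{D+1} - 2^{D-d}$) — no; I'd instead order the argument around the observation that it suffices to lower-bound $\Pr[e^\star \text{ precedes the parent edge and all children of label} \ge d]$ where $e^\star$ ranges over label-$d$ children, and use linearity: the first arrival among the union of label-$d$ children and the "bad set" lands in the label-$d$ block with probability (number of label-$d$ children)$/$(total) $= 2^{D-d} / (2^{D-d} + (2^{D-d+1}-1)) = 2^{D-d}/(3\cdot 2^{D-d} - 1) \to \tfrac13$, which is still not $\tfrac12$ — so the correct grouping must be even finer, presumably treating one label-class at a time and multiplying conditional survival probabilities $\prod_{d' \ge d}\big(1 - \text{small}\big)$, or observing that for the badness of the \emph{first} label-$d$ edge we only need it to beat label-$\ge d$ children and the parent, and a label-$d'$ block of size $2^{D-d'}$ for $d' > d$ contributes failure probability roughly $\frac{2^{D-d'}}{2^{D-d'}+1}$ only if that whole block is relevant — but once $e^\star$ is fixed as first-among-label-$d$, the competition is edge-by-edge and a careful conditioning shows the product telescopes to something $\ge \tfrac12$. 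I would carry this out by: (1) fixing $e^\star$ and the bad set $S$, (2) writing $\Pr[\text{bad}] = \Pr[e^\star \text{ first in } \{e^\star\}\cup S \mid e^\star \text{ first in label-}d \text{ block}]$ and expanding via the standard fact that relative order of a fixed finite set of edges under a uniform permutation is uniform, (3) bounding the complement by a geometric sum $\sum_{d'=d}^{D} 2^{-(d'-d+?)}$ that totals at most $\tfrac12$, which is where all the care goes.
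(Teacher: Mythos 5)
Your proposal never lands a complete argument, and its false starts all trace back to one mis-framing of the event. The edge $e^\star$ in the lemma is not a \emph{fixed} edge --- it is, by definition, whichever label-$d$ child edge of $r$ happens to arrive first. Computing $\Pr[\,e^\star \text{ precedes } S\,]$ for a \emph{fixed} $e^\star$ (your $2^{-(D-d+1)}$ and $2^{-(D-d)}$ attempts) therefore answers the wrong question. The clean observation, which is what the paper uses, is this: let $R$ be the set consisting of the parent edge of $r$ together with all child edges $(r,v')$ with $\ell(v') \ge d$, so $|R| = 1 + \sum_{d''=d}^{D-1} 2^{D-d''} = 2^{D-d+1}-1$. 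The permutation $\pi$ is bad for the first-arriving label-$d$ child edge \emph{if and only if} the first-arriving edge of $R$ has label exactly $d$: badness requires $e^\star$ to beat the parent and every higher-label child, and ``first among label-$d$'' automatically handles the other label-$d$ siblings. Under a uniform permutation the first element of $R$ is uniform over $R$, and the label-$d$ children number $2^{D-d}$, so
\[
  \Pr[\pi\text{ bad for }e^\star] \;=\; \frac{2^{D-d}}{2^{D-d+1}-1} \;\ge\; \frac{1}{2}.
\]

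Two of your specific attempts also contain concrete errors worth flagging. In the calculation giving ``$\to 1/3$'' you take the union of the label-$d$ block with the ``bad set'' and arrive at a denominator of $3\cdot 2^{D-d}-1$, but the bad set as you defined it already contains the label-$d$ children, so you double-counted them; removing the double count returns $|R| = 2^{D-d+1}-1$ and the ratio $\ge 1/2$. Your suggested telescoping/union-bound alternative also fails quantitatively: the probability that some label-$d'$ sibling ($d'>d$) precedes the first label-$d$ edge is $2^{D-d'}/(2^{D-d'}+2^{D-d}) = 1/(1+2^{d'-d})$, and $\sum_{j\ge 1} 1/(1+2^{j}) \approx 0.76 > 1/2$ even before adding the parent term, so summing over levels cannot yield the bound. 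The grouping you need is not finer but coarser: lump all of $R$ into a single set and ask where its minimum lands.
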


\begin{proof}
    The probability can be easily calculated as $$\frac{\text{\#edges with } \ell(v) = d}{\text{\#edges with } \ell(v) \geq d} = \frac{2^{D-d}}{2^{D-d} + \cdots + 2^{D- (D-1)} + 1} \geq \frac{1}{2}. \qedhere$$
\end{proof}

Focus on the root $r$ with $\ell(r) = D$, and an integer $d < D$. Consider the first edge $(r, u)$ with $\ell(u) = d$. With probability 1/2, the permutation is bad for this edge. Conditioned on the permutation being bad, the algorithm assigns the edge to $r$ with probability 1/2. Therefore, the algorithm assigns the edge to $r$ with probability at least $1/4$. Consider all values of $d < D$, in expectation the root will get a load of $\Omega(D)$. The size of the tree is $2^{O(D^2)}$, as we will see below.  Therefore we can obtain a lower bound of $\Omega(\sqrt{\log n})$.

\begin{lemma}
    If $n$ is the size of the tree $T_D$, then $D = \Omega(\sqrt{\log n})$.
\end{lemma}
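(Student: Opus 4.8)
The plan is to establish an upper bound on $n$, the number of nodes in $T_D$, of the form $n \le 2^{cD^2}$ for some constant $c$, which immediately yields $\log n = O(D^2)$ and hence $D = \Omega(\sqrt{\log n})$. I will work with the recursive definition directly: $T_0$ is a single node, and for $1 \le d \le D$ the root of $T_d$ has, for each $d' \in \{0,1,\dots,d-1\}$, exactly $2^{D-d'}$ subtrees each isomorphic to $T_{d'}$. Let $N_d$ denote the number of nodes of $T_d$ (for the fixed ambient parameter $D$). Then we get the recurrence
\[
  N_d = 1 + \sum_{d'=0}^{d-1} 2^{D-d'}\, N_{d'}, \qquad N_0 = 1,
\]
and the goal is to bound $N_D$.

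First I would prove by induction on $d$ that $N_d \le 2^{(D+1)(d+1)}$, say, or some similarly clean closed form; the exact constant does not matter. The base case $N_0 = 1 \le 2^{D+1}$ is trivial. For the inductive step, plug the hypothesis into the recurrence: $\sum_{d'=0}^{d-1} 2^{D-d'} N_{d'} \le \sum_{d'=0}^{d-1} 2^{D-d'} 2^{(D+1)(d'+1)}$, and since the exponent $(D+1)(d'+1) + D - d'$ is increasing in $d'$, the sum is dominated by its last term up to a factor of $2$ (a geometric-type bound), giving something of order $2^{(D+1)d + D}$, which is at most $2^{(D+1)(d+1)}$ with room to spare once one checks the exponents. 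A cleaner alternative is to just observe $N_d \le d \cdot 2^{D} \cdot 2^{D} \cdot \max_{d' < d} N_{d'}$ crudely — each of the at most $d$ levels contributes at most $2^D$ subtrees of size at most $N_{d-1}$ — giving $N_d \le (d\,2^D)\, N_{d-1}$ and hence $N_D \le (D\,2^D)^D = 2^{D^2 + D\log D} = 2^{O(D^2)}$. Taking logs: $\log N_D = O(D^2)$, so $D = \Omega(\sqrt{\log N_D}) = \Omega(\sqrt{\log n})$.

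I do not expect any real obstacle here; this is a routine size estimate. The only thing to be slightly careful about is that the number of $T_{d'}$-subtrees, $2^{D-d'}$, is largest exactly when $d'$ is smallest (the shallowest subtrees are the most numerous), so in the recurrence the terms with small $d'$ have big multiplicities but tiny subtree sizes, while terms with large $d'$ have small multiplicities but big subtree sizes — one should confirm that neither end blows the bound up, which the monotonicity-of-exponents argument above handles. Combining this lemma with the preceding paragraph's argument (each $d < D$ contributes at least $1/4$ to the expected load of $r$, so $\E[\text{load}(r)] \ge D/4$, and a fixed algorithm must have expected makespan $\ge D/4$ while $\opt = 1$) gives the stated $\Omega(\sqrt{\log n})$ lower bound on the competitive ratio, completing Section~\ref{sec:lb-general}.
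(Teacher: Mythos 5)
Your proof is correct and takes essentially the same route as the paper's: both observe from the recurrence $n(d) = 1 + \sum_{d'=0}^{d-1} 2^{D-d'} n(d')$ that $n(d) \le D\,2^D\,n(d-1)$, hence $n(D) \le 2^{O(D^2)}$ and $D = \Omega(\sqrt{\log n})$. (One small typo: the intermediate expression $N_d \le d \cdot 2^D \cdot 2^D \cdot \max_{d'<d} N_{d'}$ has a spurious extra $2^D$ factor, but the line that follows, $N_d \le (d\,2^D) N_{d-1}$, is the one you actually use and is right.)
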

\begin{proof}
    Let $n(d)$ denote the number of nodes in $T_d$. From the recursive construction of the trees, we have
    $$
    n(d) = \\ 
    \begin{cases}
        1   & d = 0 \\   
        2^{D-0} n(0) +  \cdots + 2^{D -(d-1)} n(d-1) + 1 & d \in [1, D-1]  \\    
    \end{cases}
    $$
    Since $n(d)$ is increasing in $d$, when $d \geq 1$, we have
    $n(d) \leq 2^D D \cdot n(d-1) \leq 4^D \cdot n(d-1)$. Therefore,  $n = n(D) \leq 4^{D^2}$. This gives us the desired result $D = \Omega( \sqrt{\log n})$.
   
\end{proof} 
 
 \begin{theorem}
     There exists a tree $T$ with $n$ nodes such that for uniform random edge arrivals, any algorithm outputs an orientation with maximum load $\Omega(\sqrt{\log n})$, with high probability.
 \end{theorem}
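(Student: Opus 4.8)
The plan is to upgrade the expectation argument sketched above into a high-probability bound by exposing the randomness one ``scale'' at a time and then invoking stochastic domination together with a Chernoff bound.

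By Yao's minimax principle it suffices to fix a deterministic algorithm and argue over a random input, which here consists of a random relabeling $\psi$ of the node identities (hidden from the algorithm) together with the random arrival order $\pi$. Root the instance at $r$ with $\ell(r)=D$. For each $d\in\{0,1,\dots,D-1\}$ let $e_d$ be the first arriving edge incident to $r$ whose other endpoint has label $d$, and let $Z_d=1$ if the algorithm orients $e_d$ towards $r$ and $Z_d=0$ otherwise. The edges $e_0,\dots,e_{D-1}$ are distinct, so the load of $r$ is at least $\sum_{d=0}^{D-1}Z_d\le D$. The two facts already established --- that a random permutation is bad for $e_d$ with probability at least $\tfrac12$, and that conditioned on being bad the deterministic algorithm orients $e_d$ towards $r$ with probability exactly $\tfrac12$ --- already give $\E\!\left[\sum_d Z_d\right]\ge D/4$; what remains is to show $\sum_d Z_d=\Omega(D)$ with probability $1-o(1)$.

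For the concentration I would process the scales $d=D-1,D-2,\dots,0$ and maintain a filtration $\mathcal F_D\subseteq\mathcal F_{D-1}\subseteq\cdots\subseteq\mathcal F_0$, where $\mathcal F_d$ reveals everything about the strictly coarser scales --- for every $d'>d$, the relative arrival order of the $r$-incident edges whose far endpoint has label $\ge d'$, and all arrival times and algorithm decisions inside the label-$d'$ subtrees hanging from $r$ --- and in which $Z_d$ is $\mathcal F_d$-measurable. I would then show $\Pr[Z_d=1\mid\mathcal F_{d+1}]\ge\tfrac14$ almost surely, from two ingredients. First, $\Pr[\pi\text{ is bad for }e_d\mid\mathcal F_{d+1}]\ge\tfrac12$: this is the standard fact that for a uniformly random permutation of a set partitioned into ordered groups --- here the $r$-incident edges grouped by the label of their far endpoint, together with the parent edge of $r$ --- the events ``the minimum of the $j$-th suffix-union of groups lies in group $j$'' are mutually independent, and ``bad for $e_d$'' is exactly such an event for the label-$\ge d$ suffix, which conditioning on the coarser scales leaves unaffected. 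Second, $\Pr[Z_d=1\mid\mathcal F_{d+1},\ \pi\text{ bad for }e_d]=\tfrac12$: by the symmetry claim already proved, conditioned on badness the two components containing the endpoints of $e_d$ at its arrival time are stochastically identical and $\psi$ swaps them equiprobably, and the conditioning on $\mathcal F_{d+1}$ does not interfere because $\mathcal F_{d+1}$ only refers to coarser-scale subtrees and to the order of the $r$-incident edges, all of which are left invariant by a relabeling internal to these two components. Combining the two bounds gives $\Pr[Z_d=1\mid\mathcal F_{d+1}]\ge\tfrac14$, so $\sum_{d=0}^{D-1}Z_d$ stochastically dominates a $\mathrm{Binomial}(D,\tfrac14)$ random variable; a Chernoff bound then yields $\sum_d Z_d\ge D/8$ with probability $1-e^{-\Omega(D)}=1-o(1)$, and since $D=\Omega(\sqrt{\log n})$ (by the size estimate $n\le 4^{D^2}$) the maximum load is $\Omega(\sqrt{\log n})$ with high probability.

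I expect the main obstacle to be designing the filtration so that conditioning on $\mathcal F_{d+1}$ simultaneously preserves the suffix-minimum independence (needed for the first ingredient) and the endpoint symmetry used in the already-proved orientation claim (needed for the second); in particular one must verify that the label-swaps invoked at different scales act on disjoint node sets and hence compose, so that the per-scale conditional probabilities genuinely multiply. The remaining steps --- the Chernoff estimate and the size bound --- are routine.
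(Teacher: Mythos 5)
You are right that the paper's own proof delivers only an expectation bound $\E\!\left[\sum_d Z_d\right]\ge D/4$ (indeed the paper itself only says ``in expectation the root will get a load of $\Omega(D)$'') and that an extra step is needed to justify the ``with high probability'' clause; the expectation part of your argument matches the paper and is fine. The filtration you sketch for the concentration step, however, has a genuine gap --- one you flag at the end, but the remedy you suggest does not work.

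The problematic step is $\Pr[Z_d=1\mid\mathcal F_{d+1},\ \pi\text{ bad for }e_d]=\tfrac12$. The paper's Claim proves the unconditional version via a relabeling $\sigma_d$ that exchanges the arrived component of $r$ with the arrived component of $v_d$ (the far endpoint of $e_d$) at the moment $e_d$ arrives; in particular $\sigma_d$ swaps the identities $\psi(r)\leftrightarrow\psi(v_d)$. But $r$ is an endpoint of \emph{every} $r$-incident edge, including those $(r,w)$ with $\ell(w)>d$. Swapping $\psi(r)$ therefore changes the algorithm's view of each of those edges (one of its two node labels changes), and hence can change the deterministic algorithm's decisions on them and on any edge arriving after them inside the label-$>d$ subtrees. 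Those decisions are exactly what $\mathcal F_{d+1}$ records, so $\mathcal F_{d+1}$ is not invariant under $\sigma_d$, and the conditional symmetry does not follow from the unconditional one. Your proposed fix --- verifying that ``the label-swaps invoked at different scales act on disjoint node sets and hence compose'' --- fails for the same reason: every $\sigma_d$ moves $r$, so any two swaps overlap at $r$, and moreover $\sigma_d$ moves the entire arrived label-$<d$ portion of $T_D$, which contains the domains of $\sigma_0,\dots,\sigma_{d-1}$. A correct concentration argument would need a filtration that is genuinely invariant under the relevant relabeling, or a different device (e.g., running the adversary on many disjoint copies of $T_D$, or a direct second-moment/variance bound); the present sketch supplies neither.
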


\section{Concluding Remarks}

In this paper, we revisit the classic online load balancing problem for unrelated machines in the random arrival order model. While we achieve an exponential improvement over the previous lower bound, a substantial gap remains  between the lower bound of $\Omega(\sqrt {\log m} )$ and the upper bound of $O(\log m / \log \log m)$. Importantly, the question of whether an algorithm with a better competitive ratio exists even for tree inputs in the online graph balancing problem remains open. This holds true even when the entire tree structure is known a priori (with the identities of nodes and edges hidden).

\section{Acknowledgements}
    Petety is supported in part by NSF grants CCF-1844939 and CCF-2121745. Im is supported in part by an ONR grant N00014-22-1-2701 as well as them. Li is supported in part by the State Key Laboratory for Novel Software Technology, and the New Cornerstone Science Laboratory. 

%\balance

%\bibliographystyle{plain}
%\bibliography{ref}

\appendix
\section{Chernoff Bound}
    \label{sec:chernoff}

\begin{theorem}
    \label{thm:chernoff}
    Let $X_{1}, \dots , X_{n}$ be $n$ independent, Bernoulli random variables, where $\mathbb{P}[X_{i} = 1] = p_{i}$, for all $i$. For $X = \sum X_{i}$, its expectation is $\mu = \E[X] = \sum_{i}p_i$. Then for any $0 < \delta < 1$, we have
    \begin{equation}
        P[X < (1-\delta) \mu ] < \left(\frac{e^{-\delta}}{(1-\delta)^{1-\delta}}\right)^\mu.
    \end{equation}
Further, the upper bound is at most $e^{-\mu \delta^2 / 2}$ when $\delta \in [0, 1)$.
\end{theorem}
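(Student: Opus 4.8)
The plan is to use the standard exponential-moment (Chernoff) method applied to the lower tail. First I would fix a parameter $t > 0$, to be chosen later. Since $x \mapsto e^{-tx}$ is strictly decreasing, the event $\{X < (1-\delta)\mu\}$ equals $\{e^{-tX} > e^{-t(1-\delta)\mu}\}$, so Markov's inequality applied to the nonnegative random variable $e^{-tX}$ gives $\Pr[X < (1-\delta)\mu] \le e^{t(1-\delta)\mu}\,\E[e^{-tX}]$. By independence of the $X_i$, $\E[e^{-tX}] = \prod_{i=1}^n \E[e^{-tX_i}] = \prod_{i=1}^n\bigl(1 - p_i + p_i e^{-t}\bigr)$. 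Each factor is positive and lies in $(0,1]$, and applying $1 + y \le e^{y}$ with $y = p_i(e^{-t}-1) \le 0$ to each factor yields $\E[e^{-tX}] \le \exp\bigl(\mu(e^{-t}-1)\bigr)$; this inequality is in fact strict whenever some $p_i > 0$, equivalently $\mu > 0$ (then $y \ne 0$ and $1+y < e^{y}$). The degenerate case $\mu = 0$ is trivial: all $p_i = 0$, so $X = 0$ almost surely, the left side is $0$, and the right side is $1$.

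Next I would combine the two estimates to get $\Pr[X < (1-\delta)\mu] < \exp\bigl(\mu\,h(t)\bigr)$ for every $t > 0$, where $h(t) := t(1-\delta) + e^{-t} - 1$, and then optimize over $t$. Setting $h'(t) = (1-\delta) - e^{-t} = 0$ gives $t^{\star} = \ln\frac{1}{1-\delta}$, which is positive and finite since $\delta \in (0,1)$, and $h''(t) = e^{-t} > 0$ confirms this is the minimizer. Substituting $e^{-t^{\star}} = 1-\delta$ gives $h(t^{\star}) = -\delta - (1-\delta)\ln(1-\delta)$, hence $\Pr[X < (1-\delta)\mu] < \exp\bigl(\mu\bigl(-\delta - (1-\delta)\ln(1-\delta)\bigr)\bigr)$, which is exactly $\bigl(e^{-\delta}/(1-\delta)^{1-\delta}\bigr)^{\mu}$.

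Finally, for the ``further'' claim it suffices to show $-\delta - (1-\delta)\ln(1-\delta) \le -\delta^2/2$ for $\delta \in [0,1)$, i.e.\ $g(\delta) := (1-\delta)\ln(1-\delta) + \delta - \tfrac{\delta^2}{2} \ge 0$ on $[0,1)$. I would check this by elementary calculus: $g(0) = 0$, $g'(\delta) = -\ln(1-\delta) - \delta$ with $g'(0) = 0$, and $g''(\delta) = \tfrac{1}{1-\delta} - 1 = \tfrac{\delta}{1-\delta} \ge 0$ on $[0,1)$; thus $g'$ is nondecreasing, so $g' \ge 0$, so $g$ is nondecreasing, so $g \ge g(0) = 0$. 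Since $x \mapsto e^{\mu x}$ is nondecreasing, this upgrades the bound to $e^{-\mu\delta^2/2}$, with equality permitted at $\delta = 0$ where both sides are $1$.

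There is no serious obstacle here; this is a classical fact. The only points that require a little care are (i) tracking that the inequality $1 + y \le e^{y}$ is strict on at least one factor when $\mu > 0$, so that the final bound is strict as stated, and (ii) the short convexity argument establishing $(1-\delta)\ln(1-\delta) + \delta \ge \delta^2/2$, which is precisely what produces the clean $e^{-\mu\delta^2/2}$ form used in Lemma~\ref{lem:badnode}.
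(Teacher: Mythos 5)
Your proof is correct. The paper states Theorem~\ref{thm:chernoff} in the appendix as a standard fact without providing any proof, and your argument is precisely the canonical derivation: Markov's inequality applied to $e^{-tX}$, the factorization by independence, the bound $1+y\le e^{y}$ (with strictness correctly tracked when $\mu>0$), optimization at $t^{\star}=\ln\frac{1}{1-\delta}$, and the convexity check that $(1-\delta)\ln(1-\delta)+\delta\ge\delta^{2}/2$ to obtain the $e^{-\mu\delta^{2}/2}$ form. All steps, including the degenerate case $\mu=0$, check out.
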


\end{document}